\documentclass[journal]{IEEEtran}
\usepackage{amsmath}
\usepackage{amssymb}
\usepackage{amsfonts}
\usepackage{graphicx}
\usepackage{epsfig}
\usepackage{subfigure}
\usepackage{psfrag}
\usepackage{color}
\usepackage{cite}

\begin{document}

\title{Wireless Information and Power Transfer in Multiuser OFDM Systems}

\author{Xun Zhou, Rui Zhang,~\IEEEmembership{Member,~IEEE}, and Chin Keong Ho,~\IEEEmembership{Member,~IEEE}
\thanks{This paper has been presented in part at IEEE Global Communications
Conference (Globecom), December 9-13, 2013, Atlanta, USA.}
\thanks{X. Zhou is with the Department of
Electrical and Computer Engineering, National University of Singapore
(e-mail: xunzhou@nus.edu.sg).}
\thanks{R. Zhang is with the Department
of Electrical and Computer Engineering,
National University of Singapore (e-mail: elezhang@nus.edu.sg).
He is also with the Institute for Infocomm Research, A*STAR, Singapore.}
\thanks{C. K. Ho is with the Institute for Infocomm Research, A*STAR,
Singapore (e-mail: hock@i2r.a-star.edu.sg).}}

\maketitle

\begin{abstract}
In this paper, we study the optimal design for simultaneous wireless
information and power transfer (SWIPT) in downlink multiuser
orthogonal frequency division multiplexing (OFDM) systems, where
the users harvest energy and decode information using the same signals
received from a fixed access point (AP). For information transmission,
we consider two types of multiple access schemes, namely,
time division multiple access (TDMA) and orthogonal frequency
division multiple access (OFDMA). At the receiver side,
due to the practical limitation
that circuits for harvesting energy from radio signals are not yet able
to decode the carried information directly, each user applies either
time switching (TS) or power splitting (PS) to coordinate the
energy harvesting (EH) and information decoding (ID) processes.
For the TDMA-based information transmission, we employ TS at
the receivers; for the OFDMA-based information transmission, we employ
PS at the receivers. Under the above two scenarios,
we address the problem of maximizing the weighted sum-rate
over all users by varying the time/frequency power allocation
and either TS or PS ratio, subject to a minimum harvested energy
constraint on each user as well as a peak and/or total transmission power
constraint. For the TS scheme, by an appropriate variable transformation
the problem is reformulated as a convex problem, for which the optimal
power allocation and TS ratio are obtained by the Lagrange duality method.
For the PS scheme, we propose an iterative algorithm to optimize the power
allocation, subcarrier (SC) allocation and the PS ratio for each user.
The performances of the two schemes are compared numerically as well as analytically
for the special case of single-user setup. It is revealed that the peak power
constraint imposed on each OFDM SC as well as the number of users
in the system play key roles in the rate-energy performance comparison by the
two proposed schemes.
\end{abstract}

\begin{keywords}
Simultaneous wireless information and power transfer (SWIPT), energy harvesting,
wireless power,
orthogonal frequency division multiplexing (OFDM), orthogonal frequency
division multiple access (OFDMA), time division multiple access (TDMA),
time switching, power splitting.
\end{keywords}

\IEEEpeerreviewmaketitle

\setlength{\baselineskip}{1.0\baselineskip}
\newtheorem{definition}{\underline{Definition}}[section]
\newtheorem{fact}{Fact}
\newtheorem{assumption}{Assumption}
\newtheorem{theorem}{\underline{Theorem}}[section]
\newtheorem{lemma}{\underline{Lemma}}[section]
\newtheorem{corollary}{\underline{Corollary}}[section]
\newtheorem{proposition}{\underline{Proposition}}[section]
\newtheorem{example}{\underline{Example}}[section]
\newtheorem{remark}{\underline{Remark}}[section]
\newtheorem{algorithm}{\underline{Algorithm}}[section]

\newcommand{\mv}[1]{\mbox{\boldmath{$ #1 $}}}
\newcommand{\Peak}{P_{\rm peak}}
\newcommand{\N}{\Gamma\sigma^2}
\newcommand{\Emin}{\overline{E}_k}
\newcommand{\Heh}{h_{k,n}^{\rm EH}}
\newcommand{\Hid}{h_{k,n}^{\rm ID}}

\section{Introduction}
Recently, simultaneous wireless information and power transfer (SWIPT) becomes appealing by essentially providing a perpetual energy source for the wireless networks \cite{Zhang}.
Moreover, the SWIPT system offers great convenience to mobile users, since it realizes both useful utilizations of radio signals to transfer energy as well as information. Therefore, SWIPT has drawn an upsurge of research interests \cite{Zhang,Varshney,Xun,Xiang,Ju,Xu,Liu-a,Liu-b,Timotheou,Park,Nasir,Huang-cellular,Lee,Sahai,Huang-OFDM,Derrick-OFDM}.
Varshney first proposed the idea of transmitting information and energy
simultaneously in \cite{Varshney} assuming
that the receiver is able to decode information
and harvest energy simultaneously from the same received signal. However, this assumption may not hold in
practice, as circuits for harvesting energy from radio signals are not yet able to decode the
carried information directly. Two practical schemes for SWIPT, namely,
time switching (TS) and power splitting (PS), are proposed in \cite{Zhang,Xun}. With TS applied at the receiver,
the received signal is either processed by an energy receiver for energy harvesting (EH) or processed by
an information receiver for information decoding (ID). With PS applied at the receiver, the received signal is split
into two signal streams with a fixed power ratio by a power splitter, with one stream to the energy receiver and
the other one to the information receiver.
SWIPT for multi-antenna systems has been considered in \cite{Zhang,Xiang,Xu,Ju}.
In particular, \cite{Zhang} studied the performance limits of a three-node multiple-input multiple-output (MIMO)
broadcasting system, where one receiver harvests energy and another receiver decodes information from the signals sent
by a common transmitter. \cite{Xiang} extended the work in \cite{Zhang} by
considering imperfect channel state information (CSI) at the transmitter for a multiple-input single-output (MISO) system.
A MISO SWIPT system without CSI at the transmitter was considered in \cite{Ju}, where a new scheme that employs random beamforming for opportunistic EH was proposed.
\cite{Xu} studied a MISO broadcasting system that exploits near-far channel conditions,
where ``near'' users are scheduled for EH, while ``far'' users are scheduled for ID.
SWIPT that exploits flat-fading channel variations was studied in \cite{Liu-a,Liu-b},
where the receiver performs dynamic time switching (DTS) \cite{Liu-a}
or dynamic power splitting (DPS) \cite{Liu-b} to coordinate between EH and ID.
SWIPT in multi-antenna interference channels was considered in \cite{Timotheou,Park} using PS and TS, respectively.
SWIPT with energy/information relaying has been studied in \cite{Nasir}, where an energy-constrained relay
harvests energy from the received signal and uses that harvested energy to forward the source information to
the destination.
Two relaying protocols, i.e., the TS-based relaying (TSR) protocol and the PS-based relaying (PSR) protocol, are proposed
in \cite{Nasir}. In the TSR protocol, the relay spends a portion of time for EH and the remaining time for information processing. In the PSR protocol, the relay spends a portion of the received power for EH and the remaining power
for information processing.
Networks that involve wireless power transfer were studied in \cite{Huang-cellular,Lee}.
In \cite{Huang-cellular}, the authors studied a hybrid network which
overlays an uplink cellular network with randomly deployed
power beacons that charge mobiles wirelessly. Under an outage
constraint on the data links, the tradeoffs among the network
parameters were derived. In \cite{Lee}, the authors investigated a
cognitive radio network powered by opportunistic wireless
energy harvesting, where mobiles from the secondary network
either harvest energy from nearby transmitters in a primary
network, or transmit information if the primary transmitters are
far away. Under an outage constraint for coexisting networks,
the throughput of the secondary network was maximized.

Orthogonal frequency division multiplexing (OFDM) is a well established technology for high-rate wireless communications, and
has been adopted in various standards, e.g., IEEE 802.11n and 3GPP-Long Term Evolution (LTE).
However, the performance may be limited by the availability of energy in the devices for some practical application scenarios.
It thus motivates our investigation of SWIPT in OFDM-based systems.
SWIPT over a single-user OFDM channel has been studied
in \cite{Sahai} assuming that the receiver is able to decode information
and harvest energy simultaneously from the same received signal.
It is shown in \cite{Sahai} that a tradeoff exists between the achievable
rate and the transferred power by power allocation in the frequency bands:
for sufficiently small transferred power, the optimal power allocation is
given by the so-called ``waterfilling (WF)'' allocation to maximize the
information transmission rate, whereas
as the transferred power increases, more power needs to be allocated to the
channels with larger channel gain and finally approaches the strategy with
all power allocated to the channel with largest channel gain.
However, due to the practical limitation that circuits for harvesting energy from radio signals are not yet able to decode the
carried information directly, the result in \cite{Sahai} actually provides only an upper bound for the rate-energy tradeoff in a single-user OFDM system.
Power control for SWIPT in a multiuser multi-antenna OFDM system was considered in \cite{Huang-OFDM}, where
the information decoder and energy harvester are attached to two separate antennas.
In \cite{Huang-OFDM}, each user
only harvests the energy carried by the subcarriers that are allocated to that user for ID, which is inefficient in energy utilization, since the energy carried by the subcarriers allocated to other users for ID can be potentially harvested.
Moreover, \cite{Huang-OFDM} focuses on power control by assuming a predefined subcarrier allocation. In this paper, we jointly optimize the power allocation strategy as well as the subcarrier allocation strategy.
\cite{Derrick-OFDM} considered SWIPT in a multiuser single-antenna OFDM system, where PS is applied at each receiver to coordinate between EH and ID. In \cite{Derrick-OFDM}, it is assumed that the splitting ratio can be different for different subcarriers.
However, in practical circuits, (analog) power splitting is performed before (digital) OFDM demodulation.
Thus, for an OFDM-based SWIPT system, {\em all} subcarriers would have to be power split with the same power ratio at each receiver even though only a subset of the subcarriers contain information for the receiver. In contrast, for the case of a single-carrier system, a receiver simply harvests energy from all signals that do not contain information for this receiver.

As an extension of our previous work in \cite{Xun} for a single-user narrowband SWIPT system, in this paper, we study a multiuser OFDM-based SWIPT system (see Fig. \ref{fig:system}), where a fixed access point (AP) with constant power supply broadcasts signals to a set of distributed users.
Unlike the conventional wireless network where all the users contain only information receiver and draw energy from their own power supplies, in our model,
it is assumed that each user contains an additional energy receiver to harvest energy from the received signals from the AP.
For the information transmission, two conventional multiple access schemes are considered, namely, time division multiple access (TDMA) and orthogonal frequency division multiple access (OFDMA). For the TDMA-based information transmission, since
the users are scheduled in nonoverlapping time slots to decode information, each user should apply TS such that the information receiver is used during the time slot when information is scheduled for that user, while the energy receiver is used in all other time slots. For the OFDMA-based information transmission, we assume that PS is applied at each receiver. As mentioned in the previous paragraph, we assume that all subcarriers share the same power splitting ratio at each receiver.
Under the TDMA scenario,
we address the problem of maximizing the weighted sum-rate over all users by varying the power allocation in time and frequency and the TS ratios, subject to a minimum harvested energy constraint on each user and a peak and/or total transmission power constraint.
By an appropriate variable transformation the problem is reformulated as a convex problem, for which the optimal power allocation and TS ratios are obtained by the Lagrange duality method.
For the OFDMA scenario, we address the same problem by varying the power allocation in frequency, the subcarrier allocation to users and the PS ratios.
In this case, we propose an efficient algorithm to iteratively optimize the power and subcarrier allocation, and the PS ratios at receivers until the convergence is reached.
Furthermore, we compare the rate-energy performances by the two proposed schemes, both numerically by simulations and analytically
for the special case of single-user system setup.
It is revealed that the peak power
constraint imposed on each OFDM subcarrier as well as the number of users
in the system play key roles in the rate-energy performance comparison by the
two proposed schemes.

The rest of this paper is organized as follows. Section \ref{sec:system model} presents the system model and problem formulations. Section \ref{sec:solution-single} studies the special case of a single-user OFDM-based SWIPT system.
Section \ref{sec:solution} derives the resource allocation solutions for the two proposed schemes in the multiuser OFDM-based SWIPT system. Finally, Section \ref{sec:conclusion} concludes the paper.

\section{System Model and Problem Formulation}\label{sec:system model}
\begin{figure}
\begin{center}
\scalebox{0.5}{\includegraphics{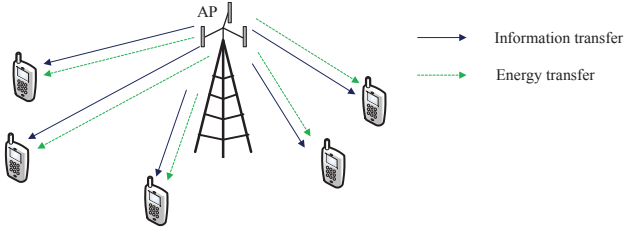}}
\end{center}
\caption{A multiuser downlink SWIPT system.}
\label{fig:system}
\end{figure}

As shown in Fig. \ref{fig:system}, we consider a downlink OFDM-based system with
one transmitter and $K$ users.
The transmitter and all users are each equipped with one antenna.
The total bandwidth of the system is equally divided into $N$ subcarriers (SCs).
The SC set is denoted by $\mathcal{N}=\{1,\ldots,N\}$.
The power allocated to SC $n$ is denoted by $p_n,n=1,\ldots,N$.
Assume that the total transmission power is at most $P$.
The maximum power allocated to each SC is denoted by $\Peak$, i.e.,
$0\leq p_n\leq \Peak,\forall n\in\mathcal{N}$, where $\Peak\geq P/N$.
The channel power gain of SC $n$ as seen by the user $k$ is denoted
by $h_{k,n},k=1,\ldots,K,n=1,\ldots,N$.
We consider a slow-fading environment, where all the channels are assumed to be constant within
the transmission scheduling period of our interest.
For simplicity, we assume the total transmission time to be one.
Moreover, it is assumed that the channel
gains on all the SCs for all the users are known at the transmitter.
At the receiver side, each user performs EH in addition to ID.
It is assumed that the minimum harvested energy during the unit transmission time
is $\Emin>0$ for user $k,k=1,\ldots,K$.

\subsection{TDMA with Time Switching}
We first consider the case of TDMA-based information transmission with TS applied at each receiver.
It is worth noting that for a single-user SWIPT system with TS applied at the receiver, the transmission time needs to be divided into two time slots to coordinate the EH and ID processes at the receiver.
Thus, in the SWIPT system with $K$ users, we consider $K+1$ time slots without loss of generality, where the additional time slot, which we called the {\em power} slot, may be allocated for all users to perform EH only. In contrast, in conventional TDMA systems without EH, the power slot is not required.
We assume that slot $k,k=1,\ldots,K$ is assigned to user $k$ for transmitting information,
while slot $K+1$ is the power slot.
With total time duration of $K+1$ slots to be at most one,
the (normalized) time duration for slot $k,k=1,\ldots,K+1$ is variable and
denoted by the TS ratio $\alpha_k$, with $0\leq\alpha_k\leq 1$ and $\sum\limits_{k=1}^{K+1}\alpha_k\leq 1$.
In addition, the power $p_n$ allocated to SC $n$ at slot $k$ is specified as $p_{k,n}$,
where $0\leq p_{k,n}\leq \Peak,k=1,\ldots,K+1,n=1,\ldots,N$.
The average transmit power constraint is thus given by
\begin{equation}\label{}
    \sum\limits_{k=1}^{K+1}\alpha_k\sum\limits_{n=1}^{N}p_{k,n} \leq P.
\end{equation}

Consider user $k,k=1,\ldots,K.$ At the receiver side, user $k$ decodes its intended information at slot $k$ when its information is sent and harvests energy during all the other slots $i\neq k$.
The receiver noise at each user is assumed to be independent over SCs and is modelled as a circularly symmetric complex Gaussian (CSCG) random variable with zero mean and variance $\sigma^2$ at all SCs.
Moreover, the gap for the achievable rate from the channel capacity due to a practical modulation and coding scheme (MCS)
is denoted by $\Gamma\geq1$.
The achievable rate in bps/Hz for the information receiver of user $k$ is thus given by
\begin{equation}
    R_k = \frac{\alpha_k}{N}\sum\limits_{n=1}^{N} \log_2 \left(1+\frac{h_{k,n}p_{k,n}}{\N}\right).
\end{equation}
Assuming that the conversion efficiency of the energy harvesting process at each receiver
is denoted by $0<\zeta<1$, the harvested energy in joule at the energy receiver of user $k$ is thus given by
\begin{equation}\label{eq:Ek-TS}
    E_k = \zeta\sum\limits_{i\neq k}^{K+1}\alpha_i\sum\limits_{n=1}^{N} h_{k,n} p_{i,n}.
\end{equation}

\begin{figure}
\begin{center}
\scalebox{0.55}{\includegraphics{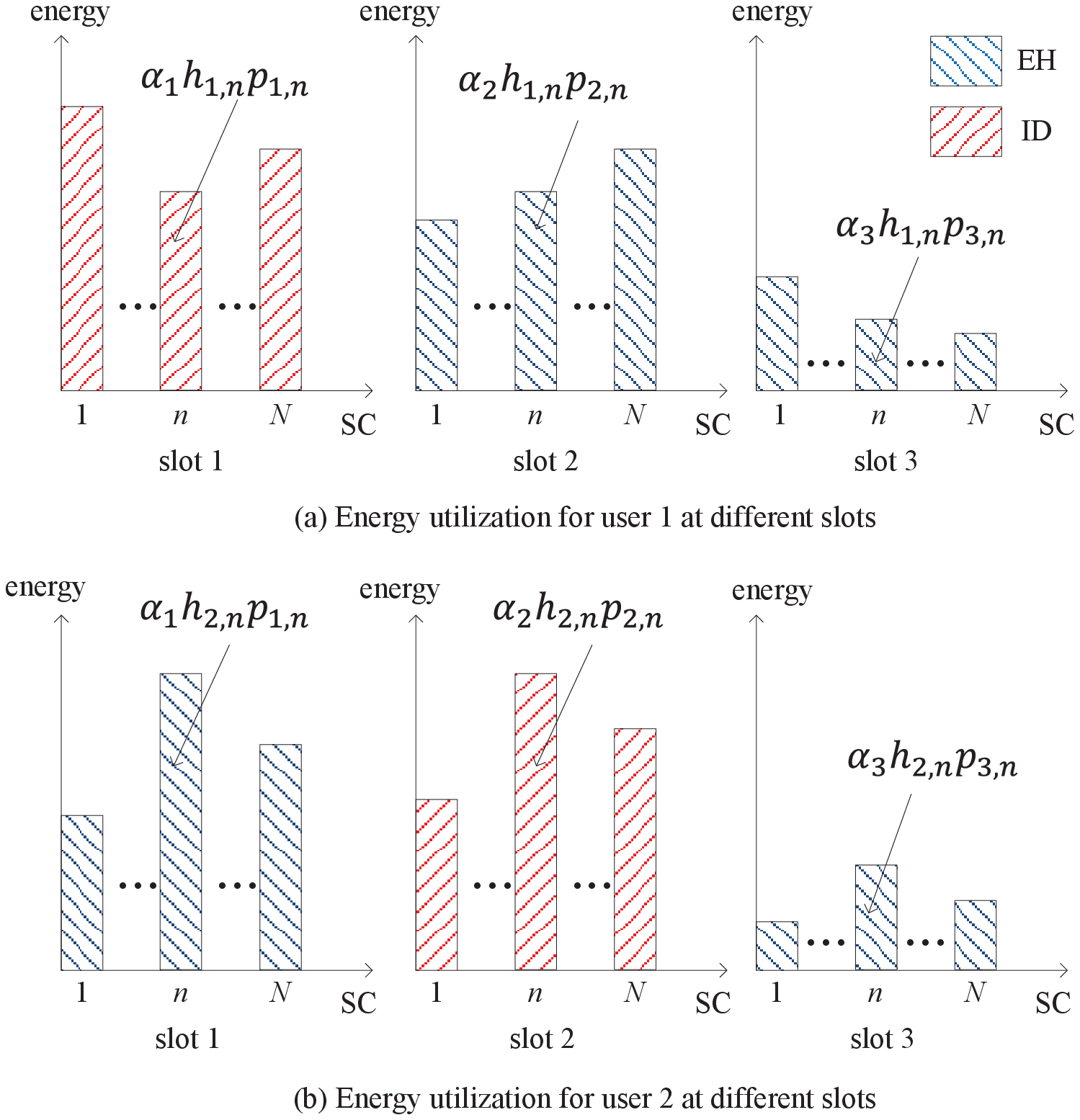}}
\end{center}
\caption{Energy utilization at receivers for a two-user OFDM-based SWIPT system: TDMA-based information transmission with TS applied at each receiver. In sub-figure (a) for user 1, the received energy on all SCs during slot 1 is utilized for ID; while the received energy on all SCs during slot 2 and slot 3 is utilized for EH. In sub-figure (b) for user 2, the received energy on all SCs during slot 2 is utilized for ID; while the received energy on all SCs during slot 1 and slot 3 is utilized for EH.}
\label{fig:PA-TS}
\end{figure}

An example of the energy utilization at receivers for the TS case in a two-user OFDM-based SWIPT system is illustrated in Fig. \ref{fig:PA-TS}. As shown in Fig. \ref{fig:PA-TS}(a) for user 1, the received energy on all SCs during slot 1 is utilized for ID; while the received energy on all SCs during slot 2 and slot 3 is utilized for EH. In Fig. \ref{fig:PA-TS}(b) for user 2, the received energy on all SCs during slot 2 is utilized for ID; while the received energy on all SCs during slot 1 and slot 3 is utilized for EH.

Our objective is to maximize the weighted sum-rate of all users by varying the transmission power in the time and frequency domains jointly with TS ratios, subject to EH constraints and the transmission
power constraints. Thus, the following optimization problem is formulated.

\begin{small}
\begin{align}
\mathrm{(P-TS)}:\hspace{10pt}\nonumber\\
\hspace{-5pt}~\mathop{\mathtt{max.}}_{\{p_{k,n}\},\{\alpha_k\}} & ~~ \frac{1}{N}\sum\limits_{k=1}^{K}\sum\limits_{n=1}^{N} w_k\alpha_k \log_2 \left(1+\frac{h_{k,n}p_{k,n}}{\N}\right) \nonumber \\
\mathtt{s.t.} & ~~ \zeta\sum\limits_{i\neq k}^{K+1}\alpha_i\sum\limits_{n=1}^{N} h_{k,n} p_{i,n} \geq \Emin,
    ~ ~ k=1,\ldots,K, \nonumber\\
              & ~~ \sum\limits_{k=1}^{K+1}\alpha_k\sum\limits_{n=1}^{N}p_{k,n}\leq P, \nonumber\\
              & ~~ 0\leq p_{k,n}\leq \Peak,  k=1,\ldots,K+1,\forall n, \nonumber \\
              & ~~ \sum\limits_{k=1}^{K+1} \alpha_k \leq 1, ~~~~ 0\leq\alpha_k\leq 1,k=1,\ldots,K+1,\nonumber
\end{align}
\end{small}%
where $w_k\geq 0$ denotes the non-negative rate weight assigned to user $k$.

Problem (P-TS) is feasible when all the constraints in Problem (P-TS) can be satisfied by some $\left\{\{p_{k,n}\},\{\alpha_k\}\right\}$. From (\ref{eq:Ek-TS}), the harvested energy at all users is maximized when $\alpha_{K+1}=1$, while $\alpha_k=0,p_{k,n}=0$ for $k=1,\ldots,K,n=1,\ldots,N$, i.e., all users harvest energy during the entire transmission time. Therefore, Problem (P-TS) is feasible if and only if the following linear programming (LP) is feasible.
\begin{align}
~\mathop{\mathtt{max.}}_{\{p_{K+1,n}\}} & ~~ 0 \nonumber \\
\mathtt{s.t.} & ~~ \zeta\sum\limits_{n=1}^{N} h_{k,n} p_{K+1,n} \geq \Emin,
    ~ ~ k=1,\ldots,K, \nonumber\\
              & ~~ \sum\limits_{n=1}^{N}p_{K+1,n}\leq P, \nonumber\\
              & ~~ 0\leq p_{K+1,n}\leq \Peak, ~~ n=1,\ldots,N.   \label{prob:feasibility}
\end{align}
It is easy to check the feasibility for the above LP. We thus assume Problem (P-TS) is feasible subsequently.

Problem (P-TS) is non-convex in its current form.
We will solve this problem in Section \ref{sec:solution-TS}.

\subsection{OFDMA with Power Splitting}

Next, we consider the case of OFDMA-based information transmission with PS applied at each receiver.
As is standard in OFDMA transmissions, each SC is allocated to at most one user in each slot, i.e., no SC sharing is allowed. We define a SC allocation function $\Pi(n)\in\{1,\ldots,K\}$, i.e., the SC $n$ is allocated to user $\Pi(n)$.
The total transmission power constraint is given by
\begin{equation}\label{}
  \sum\limits_{n=1}^{N} p_n \leq P.
\end{equation}
At the receiver, the received signal
at user $k$ is processed by a power splitter,
where a ratio $\rho_k$ of power is split to its energy receiver and the
remaining ratio $1-\rho_k$ is split to its information receiver, with $0\leq\rho_k\leq 1,\forall k$.
The achievable rate in bps/Hz at SC $n$ assigned to user $\Pi(n)$ is thus
\begin{equation}
    R_n = \log_2 \left(1+\frac{(1-\rho_{\Pi(n)})h_{\Pi(n),n}p_n}{\N}\right), ~~ n=1,\ldots,N.
\end{equation}
With energy conversion efficiency $\zeta$, the harvested energy in joule at the energy receiver of user $k$ is thus given by
\begin{equation}\label{eq:Ek-PS}
    E_k = \rho_k\zeta\sum\limits_{n=1}^{N} h_{k,n} p_n, ~ ~ k=1,\ldots,K.
\end{equation}

\begin{figure}
\begin{center}
\scalebox{0.55}{\includegraphics{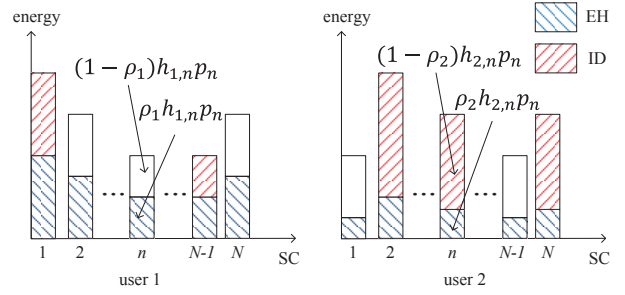}}
\end{center}
\caption{Energy utilization at receivers for a two-user OFDM-based SWIPT system: OFDMA-based information transmission with PS applied at each receiver. The received signals at all SCs share the same splitting ratio $\rho_k$ at each user $k,k=1,2$.}
\label{fig:PA-PS}
\end{figure}

An example of the energy utilization at receivers for the PS case in a two-user OFDM-based SWIPT system is illustrated in Fig. \ref{fig:PA-PS}. As shown in Fig. \ref{fig:PA-PS}, the received signals at all SCs share the same splitting ratio $\rho_k$ at each user $k,k=1,2$. It is worth noting that only $\rho_1$ of the power at each of the SCs allocated to user 2 for ID is harvested by user 1, the remaining $1-\rho_1$ of power at those SCs is neither utilized for EH nor ID at user 1, similarly as for user 2 with PS ratio $\rho_2$.

With the objective of maximizing the weighted sum-rate of all users by varying the transmission power in the frequency domain, the SC allocation, jointly with the PS ratios at receivers, subject to a given set of EH constraints and the transmission power constraints, the following optimization problem is formulated.

\begin{small}
\vspace{-0.1in}
\begin{align}
\mathrm{(P-PS)}:\hspace{18pt}\nonumber\\
\mathop{\mathtt{max.}}_{\{p_n\},\{\Pi(n)\},\{\rho_k\}} & ~~ \frac{1}{N}\sum\limits_{n=1}^{N}w_{\Pi(n)}\log_2 \left(1+\frac{(1-\rho_{\Pi(n)})h_{\Pi(n),n}p_n}{\N}\right) \nonumber \\
\mathtt{s.t.} & ~~ \rho_k\zeta\sum\limits_{n=1}^{N} h_{k,n} p_n \geq \Emin, ~ ~ \forall k \nonumber\\
              & ~~ \sum\limits_{n=1}^{N} p_n \leq P, ~~ 0\leq p_n\leq \Peak,\forall n \nonumber\\
              & ~~ 0\leq\rho_k\leq 1, ~~ \forall k. \nonumber
\end{align}\vspace{-0.1in}
\end{small}

From (\ref{eq:Ek-PS}), the harvested energy at all users is maximized when $\rho_k=1,k=1,\ldots,K$, i.e., all power is split to the energy receiver at each user. Therefore, Problem (P-PS) is feasible if and only if Problem (P-PS) with $\rho_k=1,k=1,\ldots,K$ is feasible. It is worth noting that Problem (P-PS) and Problem (P-TS) are subject to the same feasibility conditions as given by Problem (\ref{prob:feasibility}).

It can be verified that Problem (P-PS) is non-convex in its current form. We will solve this problem in Section \ref{sec:solution-PS}.

\subsection{Performance Upper Bound}
An upper bound for the optimization problems (P-TS) and (P-PS) can be obtained by assuming that each receiver is able to decode the information in its received signal and at the same time harvest the received energy without any implementation loss \cite{Sahai}.
We thus consider the following optimization problem.

\begin{small}
\vspace{-0.1in}
\begin{align}
\mathrm{(P-UB)}:
~\mathop{\mathtt{max.}}_{\{p_n\},\{\Pi(n)\}} & ~~ \frac{1}{N}\sum\limits_{n=1}^{N}w_{\Pi(n)}\log_2 \left(1+\frac{h_{\Pi(n),n}p_n}{\N}\right) \nonumber \\
\mathtt{s.t.} & ~~ \zeta\sum\limits_{n=1}^{N} h_{k,n} p_n \geq \Emin, ~ ~ \forall k \nonumber\\
              & ~~ \sum\limits_{n=1}^{N} p_n \leq P, ~~ 0\leq p_n\leq \Peak,\forall n. \nonumber
\end{align}
\end{small}%
Note that Problem (P-UB), as well as Problem (P-TS) and Problem (P-PS) are subject to the same feasibility conditions as given by Problem (\ref{prob:feasibility}).
Also note that any infeasible Problem (P-UB) can be modified to become a feasible one by increasing the transmission power $P$ or by decreasing the minimum required harvested energy $\Emin$ at some users $k$.
In the sequel, we assume that all the three problems are feasible, thus optimal solutions exist.

The solution for Problem (P-UB) is obtained in Section \ref{sec:solution-PS} (see Remark \ref{remark:P-UB}).

\section{Resource Allocation in a Single-user System}\label{sec:solution-single}
To obtain tractable analytical results, in this section, we consider the special case that $K=1$, i.e., a single-user OFDM-based SWIPT system.
For brevity, $h_{1,n}$, $\overline{E}_1$, and $\rho_1$ are replaced with $h_n$, $\overline{E}$, and $\rho$, respectively.
Without loss of generality, we assume that $h_1\geq h_2\geq \ldots\geq h_N$
and $w_1=1$.
With $K=1$, Problem (P-TS) and Problem (P-PS) are then simplified respectively as follows
\begin{align}
\mathop{\mathtt{max.}}_{\{p_{1,n}\},\{p_{2,n}\},\alpha_1,\alpha_2} & ~~ \frac{\alpha_1}{N}\sum\limits_{n=1}^{N} \log_2 \left(1+\frac{h_{n}p_{1,n}}{\N}\right) \nonumber \\
\mathtt{s.t.} & ~~ \zeta\alpha_2\sum\limits_{n=1}^{N} h_{n} p_{2,n} \geq \overline{E},		\nonumber\\
              & ~~ \alpha_1\sum\limits_{n=1}^{N}p_{1,n}+\alpha_2\sum\limits_{n=1}^{N}p_{2,n}\leq P, \nonumber\\
              & ~~ 0\leq p_{i,n}\leq \Peak, ~~ \forall n, i=1,2, \nonumber \\
              & ~~ \alpha_1+\alpha_2 \leq 1, ~~ 0\leq\alpha_i\leq 1, i=1,2.			\label{prob:TS-single}
\end{align}
\begin{align}
~\mathop{\mathtt{max.}}_{\{p_n\},\rho} & ~~ \frac{1}{N}\sum\limits_{n=1}^{N}\log_2 \left(1+\frac{(1-\rho)h_{n}p_n}{\N}\right) \nonumber \\
\mathtt{s.t.} & ~~ \rho\zeta\sum\limits_{n=1}^{N} h_{n} p_n \geq \overline{E},  \nonumber\\
              & ~~ \sum\limits_{n=1}^{N} p_n \leq P, \nonumber\\
              & ~~ 0\leq p_n\leq \Peak, ~~ \forall n, \nonumber\\
              & ~~ 0\leq\rho\leq 1.           \label{prob:PS-single}
\end{align}

To obtain useful insight, we first look at the two extreme cases, i.e., $\Peak\rightarrow\infty$ and $\Peak=P/N$.
We shall see that the peak power constraint plays an important role in the performance comparison between the TS and PS schemes.
Note that $\Peak\rightarrow\infty$ implies the case of no peak power constraint on each SC; and $\Peak=P/N$ implies the case of only peak power constraint on each SC, since the total power constraint is always satisfied and thus becomes redundant. Given $P$ and $\Peak$, the maximum rates achieved by the TS scheme and the PS scheme are denoted by $R_{\rm TS}(P,\Peak)$ and $R_{\rm PS}(P,\Peak)$, respectively.
For the case of $\Peak\rightarrow\infty$, we have the following proposition for the TS scheme.
We recall that $\alpha_2=1-\alpha_1$ is the TS ratio for the power slot.
\begin{proposition}\label{proposition-1}
Assuming $\overline{E}>0$, in the case of a single-user OFDM-based SWIPT system with $\Peak\rightarrow\infty$, the maximum rate by the TS scheme, i.e., $R_{\rm TS}(P,\infty)$, is achieved by
$\alpha_1\rightarrow 1$ and $\alpha_2\rightarrow 0$.
\end{proposition}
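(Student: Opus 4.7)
The plan is to reparametrize Problem~(\ref{prob:TS-single}) using the per-slot energies $q_n \triangleq \alpha_1 p_{1,n}$ and $r_n \triangleq \alpha_2 p_{2,n}$. Because $\Peak\to\infty$, these variables are constrained only by nonnegativity: any $q_n,r_n\ge 0$ can be realized by $p_{1,n}=q_n/\alpha_1$ and $p_{2,n}=r_n/\alpha_2$ whenever $\alpha_1,\alpha_2>0$. In these variables, the EH constraint becomes $\zeta\sum_n h_n r_n\ge\overline{E}$, the sum-power constraint becomes $\sum_n q_n+\sum_n r_n\le P$, and the objective reads
\begin{equation*}
R(\alpha_1,\{q_n\})=\frac{\alpha_1}{N}\sum_{n=1}^{N}\log_2\!\left(1+\frac{h_n q_n}{\alpha_1\N}\right).
\end{equation*}
Crucially, neither the EH nor the sum-power constraint now involves $\alpha_1$ or $\alpha_2$.

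First I would exploit the fact that only the EH constraint involves $\{r_n\}$: the minimum of $\sum_n r_n$ subject to $\zeta\sum_n h_n r_n\ge\overline{E}$ is achieved by concentrating all the power on the strongest subcarrier, yielding $\sum_n r_n=\overline{E}/(\zeta h_1)$ (recall $h_1\geq h_2\geq\cdots\geq h_N$). This reduces the effective ID budget to $P-\overline{E}/(\zeta h_1)$ irrespective of the value of $\alpha_1,\alpha_2$.

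Next I would establish that $R(\alpha_1,\{q_n\})$ is strictly increasing in $\alpha_1$ for every fixed $\{q_n\}$. The scalar map $\alpha\mapsto\alpha\log_2(1+c/\alpha)$ (the perspective of $\log_2(1+\cdot)$) is strictly increasing on $(0,\infty)$ whenever $c>0$, as a short derivative check shows; summing over $n$ preserves this monotonicity. Therefore, for any admissible $\alpha_2>0$ the rate is maximized by saturating the time budget, i.e., $\alpha_1=1-\alpha_2$, and strictly increases as $\alpha_2$ is decreased.

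Finally I would note that $\alpha_2=0$ is infeasible when $\overline{E}>0$ (the EH constraint would force $r_1\to\infty$, i.e., $p_{2,1}=r_1/\alpha_2$ diverges), but for every $\alpha_2>0$ the construction above is feasible and delivers a rate that tends, as $\alpha_2\downarrow 0$, to $\frac{1}{N}\sum_n\log_2(1+h_n q_n^\star/\N)$, where $\{q_n^\star\}$ is the waterfilling allocation over the budget $P-\overline{E}/(\zeta h_1)$. Since no feasible point exceeds this supremum, $R_{\rm TS}(P,\infty)$ is attained in the limit $\alpha_1\to 1,\alpha_2\to 0$. The only subtle step I expect is the monotonicity of the perspective function, but that reduces to a one-line derivative check and the rest of the argument is a clean decoupling.
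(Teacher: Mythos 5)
Your proposal is correct and follows essentially the same route as the paper's own proof: substitute $q_n=\alpha_1 p_{1,n}$ (exploiting $\Peak\to\infty$ so the change of variables is unconstrained), concentrate the EH power on the strongest subcarrier to reduce the ID budget to $P-\overline{E}/(\zeta h_1)$, and then use the monotonicity of the perspective function $\alpha\mapsto\alpha\log_2(1+c/\alpha)$ in $\alpha$ to conclude $\alpha_1\to 1$. Your treatment is, if anything, slightly more careful than the paper's in spelling out the derivative check and the attainment-in-the-limit issue.
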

\begin{proof}
Clearly, we have $\alpha_2>0$; otherwise, no energy is harvested, which violates
the EH constraint $\overline{E}>0$. Thus, $\alpha_1<1$.
To maximize the objective function subject to the EH constraint, it can be easily
shown that the optimal $\alpha_2$ and $p_{2,n}$ should satisfy $\zeta \alpha_2 h_1 p_{2,1}=\overline{E}$ and $p_{2,n}=0,n=2,\ldots,N$.
It follows that the minimum transmission energy consumed to achieve the harvested energy $\overline{E}$
is given by $\overline{E}/(\zeta h_1)$, i.e., $\alpha_2\sum\limits_{n=1}^{N}p_{2,n}\geq\overline{E}/(\zeta h_1)$.
Thus, in Problem (\ref{prob:TS-single}), the achievable rate $R_{\rm TS}(P,\infty)$ is given by maximizing $\frac{\alpha_1}{N}\sum\limits_{n=1}^{N}\log_2 \left(1+\frac{h_{n}p_{1,n}}{\N}\right)$ subject to $\alpha_1\sum\limits_{n=1}^{N}p_{1,n}\leq P-\overline{E}/(\zeta h_1)$ and $0\leq\alpha_1<1$.
Let $q_{1,n}=\alpha_1 p_{1,n},\forall n$, the above problem is then equivalent to maximizing
$\frac{\alpha_1}{N}\sum\limits_{n=1}^{N}\log_2 \left(1+\frac{h_{n}q_{1,n}}{\N \alpha_1}\right)$ subject to
$\sum\limits_{n=1}^{N}q_{1,n}\leq P-\overline{E}/(\zeta h_1)$ and $0\leq\alpha_1<1$.
For given $\{q_{1,n}\}$, the objective function
is an increasing function of $\alpha_1$; thus, $R_{\rm TS}(P,\infty)$ is maximized when $\alpha_1\rightarrow1$.
It follows that $\alpha_2\rightarrow 0$, which completes the proof.
\end{proof}

\begin{remark}\label{remark:Peak-infty}
By Proposition \ref{proposition-1}, to achieve $R_{\rm TS}(P,\infty)$ with $\overline{E}>0$, the portion of
transmission time $\alpha_2$ allocated to EH in each transmission block should asymptotically go to zero.
For example, let $m$ denote the number of transmitted symbols in each block, by allocating
$O(\log m)$ symbols for EH in each block and the remaining symbols for ID results in $\alpha=\log m/m\rightarrow0$ as $m\rightarrow\infty$, which satisfies the optimality condition provided in Proposition \ref{proposition-1}.
It is worth noting that $R_{\rm TS}(P,\infty)$ is achieved under the assumption that the transmitter and receiver are able to operate in the regime of infinite power in the EH time slot due to $\alpha_2\rightarrow 0$.
For a finite $\Peak$, a nonzero time ratio should be scheduled to the power slot to collect sufficient energy to satisfy the EH constraint.
\end{remark}

Moreover, we have the following proposition showing that the PS scheme performs no better than the TS scheme for the case of $\Peak\rightarrow\infty$.
\begin{proposition}\label{proposition-2}
In the case of a single-user OFDM-based SWIPT system with $\Peak\rightarrow\infty$, the maximum rate achieved by the PS scheme is no larger than that achieved by the TS scheme, i.e., $R_{\rm PS}(P,\infty)\leq R_{\rm TS}(P,\infty)$.
\end{proposition}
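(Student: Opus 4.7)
The plan is to directly compare an optimal PS allocation against a feasible TS allocation and show that the PS scheme's achievable rate can always be matched (or beaten) inside the feasible region of the TS problem derived in the proof of Proposition 1.

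First I would recall from the proof of Proposition 1 that, in the limit $\alpha_1\to 1$, $R_{\rm TS}(P,\infty)$ equals the optimum of
\begin{equation}
\frac{1}{N}\sum_{n=1}^{N}\log_2\!\left(1+\frac{h_n q_n}{\N}\right)
\qquad\text{subject to}\qquad \sum_{n=1}^{N} q_n \le P - \frac{\overline{E}}{\zeta h_1},\ q_n\ge 0.
\end{equation}
So it suffices to produce, from any feasible PS solution $(\rho,\{p_n\})$ with objective value $R$, a nonnegative vector $\{q_n\}$ obeying $\sum_n q_n \le P - \overline{E}/(\zeta h_1)$ whose corresponding rate is at least $R$.

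Next I would take the natural candidate $q_n := (1-\rho)\,p_n$, so that the TS-style rate with $\{q_n\}$ equals the PS rate exactly, and then verify the power budget. The sum is $\sum_n q_n = (1-\rho)\sum_n p_n \le (1-\rho)P$, and the key step is to upper-bound $1-\rho$ using the EH constraint. Since $h_1\ge h_n$ for all $n$ by the ordering assumption, the EH constraint gives
\begin{equation}
\overline{E}\ \le\ \rho\zeta\sum_{n=1}^{N} h_n p_n\ \le\ \rho\zeta h_1\sum_{n=1}^{N} p_n\ \le\ \rho\zeta h_1 P,
\end{equation}
hence $\rho \ge \overline{E}/(\zeta h_1 P)$. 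Substituting back yields $\sum_n q_n \le (1-\rho)P \le P - \overline{E}/(\zeta h_1)$, so $\{q_n\}$ is feasible for the reformulated TS problem.

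Therefore $R_{\rm TS}(P,\infty)$, being the supremum over all such feasible $\{q_n\}$, is at least the PS rate induced by any feasible $(\rho,\{p_n\})$; taking the supremum over PS allocations gives $R_{\rm PS}(P,\infty)\le R_{\rm TS}(P,\infty)$. The only nontrivial step is the chained inequality that invokes $h_1$ as the best channel to translate the PS energy constraint into a reduction of the total power budget; the rest is a direct substitution. (One subtlety worth noting: the TS optimum is approached in the limit $\alpha_1\to 1$ rather than attained, so the comparison should be read in that limiting sense, as already done in Proposition 1.)
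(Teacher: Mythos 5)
Your proposal is correct and follows essentially the same route as the paper's own proof: both use the EH constraint together with the ordering $h_1\ge h_n$ and the total power budget to deduce $\rho\ge \overline{E}/(\zeta h_1 P)$, apply the substitution $q_n=(1-\rho)p_n$ to convert the PS rate into an equivalent information-only power allocation with budget $(1-\rho)P\le P-\overline{E}/(\zeta h_1)$, and then compare against the reformulated TS problem from Proposition~\ref{proposition-1}. Your version merely spells out the chained inequality and the limiting sense of the TS optimum more explicitly than the paper does.
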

\begin{proof}
For the PS scheme, from the EH constraint $\rho\zeta\sum\limits_{n=1}^N h_n p_n\geq\overline{E}$, it follows
that $\rho\geq \overline{E}/(\zeta h_1 P)$ must hold. Thus,
$R_{\rm PS}(P,\infty)$ is upper bounded by maximizing $\frac{1}{N}\sum\limits_{n=1}^{N}\log_2\left(1+\frac{(1-\rho)h_{n}p_n}{\N}\right)$ subject to $\rho\geq \overline{E}/(\zeta h_1 P)$ and $\sum\limits_{n=1}^{N}p_n\leq P$.
Let $p_n'=(1-\rho)p_n,\forall n$, the above problem is then equivalent to maximizing $\frac{1}{N}\sum\limits_{n=1}^{N}\log_2\left(1+\frac{h_{n}p_n'}{\N}\right)$
subject to $\rho\geq \overline{E}/(\zeta h_1 P)$ and
$\sum\limits_{n=1}^{N}p_n'\leq (1-\rho)P$. Since $\rho\geq \overline{E}/(\zeta h_1 P)$, it follows that $(1-\rho)P\leq P-\overline{E}/(\zeta h_1)$. Note that according to Proposition \ref{proposition-1}, $R_{\rm TS}(P,\infty)$ is obtained (with $\alpha_1\rightarrow1$) by maximizing $\frac{1}{N}\sum\limits_{n=1}^{N}\log_2\left(1+\frac{h_{n}p_{1,n}}{\N}\right)$ subject to $\sum\limits_{n=1}^{N}p_{1,n}\leq P-\overline{E}/(\zeta h_1)$.
Therefore, we have $R_{\rm PS}(P,\infty)\leq R_{\rm TS}(P,\infty)$.
\end{proof}

For the other extreme case when $\Peak=P/N$, we have the following proposition.
\begin{proposition}\label{proposition-3}
In the case of a single-user OFDM-based SWIPT system with $\Peak=P/N$, the maximum rate achieved by the TS scheme is no larger than that achieved by the PS scheme, i.e., $R_{\rm TS}(P,P/N)\leq R_{\rm PS}(P,P/N)$.
\end{proposition}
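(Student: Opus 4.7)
The plan is to reduce both optimization problems to a single-variable problem by characterizing the optimal power allocation across subcarriers, and then to compare the two resulting expressions via a simple scalar inequality.

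First, I would observe that when $\Peak = P/N$, the total power constraint in both (\ref{prob:TS-single}) and (\ref{prob:PS-single}) is automatically implied by the per-SC box constraints and the time/splitting constraints, and is therefore redundant. As a result the per-SC powers decouple. In (\ref{prob:TS-single}), the objective is increasing in each $p_{1,n}$ and the harvested energy is increasing in each $p_{2,n}$, and neither variable appears in any remaining coupling constraint beyond its box constraint $[0, P/N]$. Hence the optimum must take $p_{1,n}=p_{2,n}=P/N$ for every $n$: using full power in the ID slot only helps the rate, and using full power in the power slot only makes the EH constraint easier, allowing a larger $\alpha_1$. Substituting, the EH constraint reduces to $\alpha_2 \geq \overline{E}'$, where $\overline{E}' := \overline{E} N / (\zeta P \sum_{n=1}^{N} h_n)$, and maximizing $\alpha_1$ subject to $\alpha_1+\alpha_2\leq 1$ gives $\alpha_1 = 1-\overline{E}'$, so
\begin{equation*}
R_{\rm TS}(P,P/N) = (1-\overline{E}')\cdot\frac{1}{N}\sum_{n=1}^{N}\log_2\!\left(1+\frac{h_n P}{N\N}\right).
\end{equation*}

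Second, for (\ref{prob:PS-single}) under $\Peak=P/N$ the same monotonicity argument yields $p_n=P/N$ for all $n$: increasing $p_n$ both improves the rate (through the factor $(1-\rho)h_n p_n$) and loosens the EH constraint. The EH constraint then becomes $\rho \geq \overline{E}'$, and since the rate is decreasing in $\rho$, the optimum is $\rho = \overline{E}'$, giving
\begin{equation*}
R_{\rm PS}(P,P/N) = \frac{1}{N}\sum_{n=1}^{N}\log_2\!\left(1+\frac{(1-\overline{E}')h_n P}{N\N}\right).
\end{equation*}

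Third, it then suffices to establish the scalar inequality $\log_2(1+ax) \geq a\log_2(1+x)$ for all $a\in[0,1]$ and $x\geq 0$. This follows since the two sides agree at $x=0$ and differentiating their difference gives $a(1-a)x/[(1+ax)(1+x)\ln 2] \geq 0$. Applying this termwise with $a=1-\overline{E}'$ and $x = h_n P/(N\N)$ and summing over $n$ yields $R_{\rm PS}(P,P/N) \geq R_{\rm TS}(P,P/N)$.

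The main obstacle I anticipate is cleanly justifying that $p_n = P/N$ on every SC is simultaneously optimal for both the rate term and for relaxing the EH constraint in each scheme; once that reduction is in place, the comparison collapses to the elementary inequality above. A subtle but necessary check is that $\overline{E}'\leq 1$, which is precisely the feasibility condition in (\ref{prob:feasibility}) specialized to $K=1$ and $\Peak=P/N$, so that the candidate values of $\alpha_1,\alpha_2,\rho$ above are all well defined and admissible.
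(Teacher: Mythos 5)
Your proposal is correct and follows essentially the same route as the paper's proof: both reduce each scheme to full power $P/N$ on every subcarrier (using the redundancy of the total power constraint), identify $\alpha_1^\ast = 1-\rho^\ast$, and conclude via the inequality $\log_2(1+ax)\geq a\log_2(1+x)$ for $a\in[0,1]$, which the paper attributes to concavity of the logarithm and you verify by differentiation. Your explicit feasibility check $\overline{E}'\leq 1$ is a minor addition the paper leaves implicit, but it does not change the argument.
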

\begin{proof}
With $\Peak=P/N$, the total power constraint is redundant for both TS and PS. Thus, the optimal power allocation for TS is given by $p_{1,n}^\ast=p_{2,n}^\ast=\Peak,\forall n$.
It follows that $\alpha_2\geq \frac{\overline{E}}{\zeta \Peak \sum\limits_{n=1}^N h_n}$. Then we have the optimal $\alpha_1^\ast=1-\frac{\overline{E}}{\zeta \Peak \sum\limits_{n=1}^N h_n}$.
$R_{\rm TS}(P,P/N)$ is thus given by $\frac{\alpha_1^\ast}{N}\sum\limits_{n=1}^N\log_2\left(1+\frac{h_n \Peak}{\N}\right)$.
On the other hand, the optimal power allocation for PS is given by $p_{n}^\ast=\Peak,\forall n$. It follows that $\rho^\ast=\frac{\overline{E}}{\zeta \Peak \sum\limits_{n=1}^N h_n}=1-\alpha_1^\ast$.
$R_{\rm PS}(P,P/N)$ is thus given by $\frac{1}{N}\sum\limits_{n=1}^N\log_2\left(1+\frac{\alpha_1^\ast h_n \Peak}{\N}\right)$. Due to the concavity of the logarithm function,
we have $R_{\rm TS}(P,P/N)\leq R_{\rm PS}(P,P/N)$, which completes the proof.
\end{proof}

In fact, from the proof of Proposition \ref{proposition-3}, we have $R_{\rm TS}(P,P/N)\leq R_{\rm PS}(P,P/N)$ provided that
equal power allocation (not necessarily equals to $\Peak$) over all SCs are employed for both TS and PS schemes.
Note that for a single-user OFDM-based SWIPT system with $P/N< \Peak<\infty$, the performance comparison between the TS scheme and the PS scheme remains unknown analytically.
From Proposition \ref{proposition-2} and Proposition \ref{proposition-3},
neither the TS scheme nor the PS scheme is always better. It suggests that for a single-user OFDM-based SWIPT system with sufficiently small peak power, the PS scheme may be better; with sufficiently large peak power, the TS scheme may be better.

For the special case that $N=1$, i.e., a single-carrier point-to-point SWIPT system, the following proposition shows that: for $\Peak\rightarrow\infty$, the TS and PS schemes achieve the same rate; for a finite peak power $P/N\leq\Peak<\infty$, the TS scheme performs no better than the PS scheme.

\begin{proposition}\label{proposition-4}
In the case of a single-carrier point-to-point SWIPT system with $N=1$, we have $R_{\rm TS}(P,\Peak)\leq R_{\rm PS}(P,\Peak)$, with equality if $\Peak\rightarrow\infty$.
\end{proposition}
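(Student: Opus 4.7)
The plan is to reduce both $R_{\rm TS}(P,\Peak)$ and $R_{\rm PS}(P,\Peak)$ to closed-form expressions in the single-carrier case and then compare them via the perspective inequality for $\log_2(1+\cdot)$.

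First I would simplify $R_{\rm TS}(P,\Peak)$ for $N=1$. Let $h=h_1$. Since the power slot's transmit energy $\alpha_2 p_{2,1}$ contributes only to EH, the EH constraint $\zeta\alpha_2 h p_{2,1}\ge \overline{E}$ combined with $p_{2,1}\le \Peak$ gives $\alpha_2\ge\overline{E}/(\zeta h\Peak):=\rho_0$, and this lower bound is attained by setting $p_{2,1}=\Peak$. Then, with $q:=\alpha_1 p_{1,1}$, the objective becomes $\alpha_1 \log_2(1+hq/(\alpha_1\Gamma\sigma^2))$. As in the proof of Proposition~\ref{proposition-1}, this is the perspective of $\log_2(1+\cdot)$ and is nondecreasing in $\alpha_1$ at fixed $q$; moreover the budget and peak constraints force $q\le\min\{P-\rho_0\Peak,\,\alpha_1\Peak\}$. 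Since $\Peak\ge P/N=P$ here, the budget is the binding constraint, and we may take $\alpha_1=1-\rho_0$ and $q=P-\rho_0\Peak$. With $c:=\overline{E}/(\zeta h)$ so that $\rho_0\Peak=c$, this yields
\begin{equation*}
R_{\rm TS}(P,\Peak)=(1-\rho_0)\log_2\!\left(1+\frac{h(P-c)}{(1-\rho_0)\Gamma\sigma^2}\right).
\end{equation*}

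Next I would simplify $R_{\rm PS}(P,\Peak)$. The EH constraint gives $\rho\ge \overline{E}/(\zeta h p)$, so $(1-\rho)p\le p-c$; taking $p=P$ (feasible since $\Peak\ge P$) and $\rho=c/P$ is optimal, giving
\begin{equation*}
R_{\rm PS}(P,\Peak)=\log_2\!\left(1+\frac{h(P-c)}{\Gamma\sigma^2}\right).
\end{equation*}

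The final step is the comparison. Setting $y=h(P-c)/(\Gamma\sigma^2)\ge 0$ and $x=1-\rho_0\in(0,1]$, I need $x\log_2(1+y/x)\le \log_2(1+y)$. Let $g(x)=x\log_2(1+y/x)$; then $g'(x)=\frac{1}{\ln 2}\bigl[\ln(1+y/x)-\tfrac{y}{x+y}\bigr]\ge 0$ by the elementary inequality $\ln(1+t)\ge t/(1+t)$ with $t=y/x$, so $g$ is nondecreasing on $(0,\infty)$ and $g(x)\le g(1)$ for all $x\in(0,1]$. This gives $R_{\rm TS}(P,\Peak)\le R_{\rm PS}(P,\Peak)$, with equality when $x=1$, i.e., $\rho_0=c/\Peak\to 0$, which occurs precisely as $\Peak\to\infty$.

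The main obstacle is pinning down the TS optimizer rather than the comparison itself: one must argue that $p_{2,1}=\Peak$ and $\alpha_1=1-\alpha_2$ are simultaneously optimal, which uses both the peak-power constraint on the EH slot and the monotonicity-in-$\alpha_1$ of the perspective function (the same fact already used in Proposition~\ref{proposition-1}). Once the two expressions are in hand, the inequality is essentially Jensen/concavity, and the $\Peak\to\infty$ limit drops out immediately.
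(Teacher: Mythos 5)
Your proof is correct and follows essentially the same route as the paper: both arguments observe that the EH constraint reserves transmit energy $\overline{E}/(\zeta h)$, leaving at most $P-\overline{E}/(\zeta h)$ for ID in either scheme, and then compare $\alpha_1\log_2\left(1+y/\alpha_1\right)$ against $\log_2(1+y)$ via the monotonicity of the perspective of $\log_2(1+\cdot)$. The only difference is that you pin down the exact TS optimizer ($p_2=\Peak$, $\alpha_1=1-\overline{E}/(\zeta h\Peak)$) to get a closed form, whereas the paper only needs the upper bound $\alpha_1 p_1\le P-\overline{E}/(\zeta h)$ for arbitrary feasible $\alpha_1$ and invokes Proposition \ref{proposition-1} for the equality case; your extra step is correct but not required for the inequality.
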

\begin{proof}
Since $N=1$, we remove the SC index $n$ in the subscripts of $h_n$, $p_{1,n}$, $p_{2,n}$ and $p_n$.
For the PS scheme, to satisfy the EH constraint, we have $\rho\geq\overline{E}/(\zeta hP)$; thus, with $\rho=\overline{E}/(\zeta hP)$, the maximum rate by the PS scheme is given by $R_{\rm PS}(P,\Peak)=\log_2\left(1+\frac{hP-\overline{E}/\zeta}{\N}\right)$.
For the TS scheme, we have $\alpha_2p_2\geq \overline{E}/(\zeta h)$ to satisfy the EH constraint. It follows that $\alpha_1p_1\leq P-\overline{E}/(\zeta h)$. Therefore, $R_{\rm TS}(P,\Peak)\leq\alpha_1\log_2\left(1+\frac{hP-\overline{E}/\zeta}{\alpha_1 \N}\right)\leq R_{\rm PS}(P,\Peak)$, and the
equality holds if $\alpha_1\rightarrow1$. By Proposition \ref{proposition-1}, $R_{\rm TS}(P,\infty)$ is achieved by
$\alpha_1\rightarrow1$; thus, the above equality holds if $\Peak\rightarrow\infty$, which completes the proof.
\end{proof}

Figs. \ref{fig:Rate-K1} and \ref{fig:Rate-K1N1} show the achievable rates by
different schemes versus different minimum required harvested energy $\overline{E}$.
For Fig. \ref{fig:Rate-K1}, the total bandwidth is assumed
to be $10$MHz, which is equally divided as $N=64$ subcarriers.
The six-tap exponentially distributed power profile is used to generate the frequency-selective fading channel.
For Fig. \ref{fig:Rate-K1N1} with $N=1$, i.e., a single-carrier point-to-point SWPT system,
the bandwidth is assumed to be $1$MHz.
For both figures, the transmit power is assumed to be $1$watt(W) or 30dBm.
The distance from the transmitter to the receiver is 1 meter(m), which results in $-30$dB path-loss for all the channels at a carrier frequency $900\rm{MHz}$ with path-loss exponent equal to 3.
For the energy receivers, it is assumed that $\zeta=0.2$.
For the information receivers, the noise spectral density is assumed to be $-112$dBm/Hz.
The MCS gap is assumed to be $\Gamma=9$dB.

\begin{figure}
\begin{center}
\scalebox{0.6}{\includegraphics{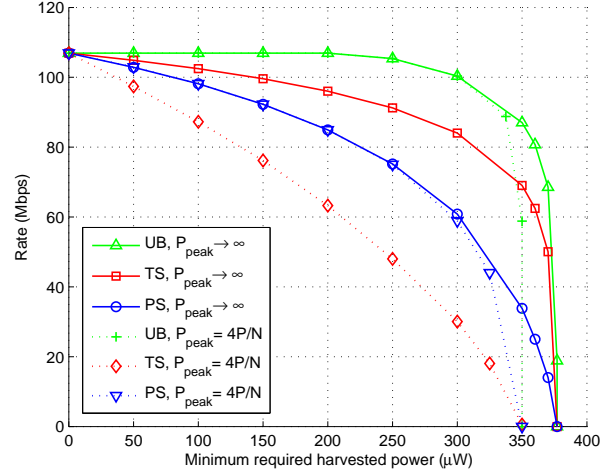}}
\end{center}
\caption{Achievable rate versus minimum required harvested energy in a single-user OFDM-based SWIPT system, where $N=64$.}
\label{fig:Rate-K1}
\end{figure}

\begin{figure}
\begin{center}
\scalebox{0.6}{\includegraphics{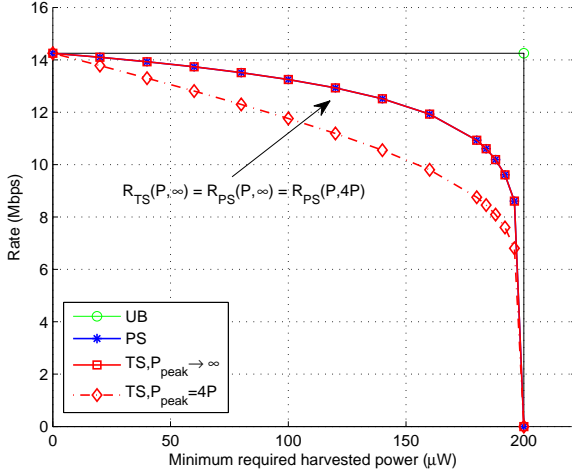}}
\end{center}
\caption{Achievable rate versus minimum required harvested energy in a single-carrier point-to-point SWIPT system, where $N=1$.}
\label{fig:Rate-K1N1}
\end{figure}

In both Fig. \ref{fig:Rate-K1} and Fig. \ref{fig:Rate-K1N1},
it is observed that for both TS and PS schemes, the achievable rate decreases as the minimum required
harvested energy $\overline{E}$ increases, since the available energy for information decoding decreases as $\overline{E}$ increases.
In Fig. \ref{fig:Rate-K1} with $N=64$, it is observed that there is a significant gap between the achievable rate by TS with $\Peak=4P/N$ and that by TS with $\Peak\rightarrow\infty$; moreover, the gap increases as $\overline{E}$ increases. This is because that with
$\Peak\rightarrow\infty$, all transmission time can be utilized for information decoding by letting $\alpha_1\rightarrow1$ (c.f. Proposition \ref{proposition-1}); whereas for a finite $\Peak=4P/N$, a nonzero
transmission time needs to be scheduled for energy harvesting.
For the PS scheme, this performance gap due to finite peak power constraint is only observed when $\overline{E}$ is sufficiently large.
Comparing the TS and PS schemes in Fig. \ref{fig:Rate-K1}, it is observed that TS outperforms PS when $\Peak\rightarrow\infty$; however, for sufficiently small $\Peak$, e.g., $\Peak=4P/N$, PS outperforms TS.
In Fig. \ref{fig:Rate-K1N1} with $N=1$, it is observed that when $\Peak\rightarrow\infty$, the achievable rate by the TS scheme is the same as that by the PS scheme; when $\Peak=4P$, the achievable rate by the TS scheme is no larger than that by the PS scheme, which is in accordance with Proposition \ref{proposition-4}.

\section{Resource Allocation in a Multiuser System}\label{sec:solution}
In this section, we consider the general case of an OFDM-based SWIPT system with multiple users.
We derive the optimal transmission strategies for the two schemes proposed in Section \ref{sec:system model}, and compare their performances.

\subsection{Time Switching}\label{sec:solution-TS}
We first reformulate Problem (P-TS) by introducing a set of new non-negative variables: $q_{k,n}=\alpha_k p_{k,n},k=1,\ldots,K+1,n=1,\ldots,N$. Moreover, we define $\alpha_k\log_2 \left(1+\frac{h_{k,n}q_{k,n}}{\N \alpha_k}\right)=0$ at $\alpha_k=0$ to keep continuity at $\alpha_k=0$.
(P-TS) is thus equivalent to the following problem:\footnote{Similar to the single-user system (c.f. Remark \ref{remark:Peak-infty}), for the case of $\Peak\rightarrow\infty$, we allow $\alpha_k\rightarrow0$ while $q_{k,n}>0$ by letting $p_{k,n}\rightarrow\infty$.}

\begin{small}
\vspace{-0.1in}
\begin{align}
~\mathop{\mathtt{max.}}_{\{q_{k,n}\},\{\alpha_k\}} & ~~ \frac{1}{N}\sum\limits_{k=1}^{K}\sum\limits_{n=1}^{N} w_k\alpha_k \log_2 \left(1+\frac{h_{k,n}q_{k,n}}{\N \alpha_k}\right) \nonumber \\
\mathtt{s.t.} & ~~ \zeta\sum\limits_{i\neq k}^{K+1}\sum\limits_{n=1}^{N} h_{k,n} q_{i,n} \geq \Emin,
    ~ ~ k=1,\ldots,K, \nonumber\\
              & ~~ \sum\limits_{k=1}^{K+1}\sum\limits_{n=1}^{N}q_{k,n}\leq P, \nonumber\\
              & ~~ 0\leq q_{k,n}\leq \alpha_k \Peak,k=1,\ldots,K+1,\forall n, \nonumber \\
              & ~~ \sum\limits_{k=1}^{K+1} \alpha_k \leq 1, ~ 0\leq\alpha_k\leq 1,k=1,\ldots,K+1.  \label{prob:TS-1}
\end{align}
\end{small}

After finding the optimal $\{q_{k,n}^\ast\}$ and $\{\alpha_k^\ast\}$ for Problem (\ref{prob:TS-1}), the optimal power allocation $\{p_{k,n}^\ast\}$ for Problem (P-TS) is given by $p_{k,n}^\ast=q_{k,n}^\ast/\alpha_k^\ast,k=1,\ldots,K+1,n=1,\ldots,N$ provided that $\alpha_k^\ast>0$.
From the constraint $0\leq q_{k,n}\leq\alpha_k \Peak,k=1,\ldots,K+1,n=1,\ldots,N$, we have
$q_{k,n}=0$ if $\alpha_k=0$ and $\Peak<\infty$. Thus, if $\alpha_k^\ast=0,k=1,\ldots,K+1$ and $\Peak<\infty$, the allocated power will be $p_{k,n}^\ast=0,n=1\ldots,N$, since no information/power transmission is scheduled at slot $k$.
For the extreme case of $\Peak\rightarrow\infty$, if $q_{k,n}^\ast=0,\alpha_k^\ast=0,k=1,\ldots,K+1,n=1,\ldots,N$, then the allocated power will be $p_{k,n}^\ast=0$; if $q_{k,n}^\ast>0$ and $\alpha_k^\ast=0$, then we have $p_{k,n}^\ast\rightarrow\infty$.

\begin{lemma}\label{lemma:1}
Function $f(q_{k,n},\alpha_k)$ is jointly concave in $\alpha_k\geq0$ and $q_{k,n}\geq0$, where
\begin{equation}\label{}
f(q_{k,n},\alpha_k)=\begin{cases}
	\alpha_k \log_2 \left(1+\frac{h_{k,n}q_{k,n}}{\N \alpha_k}\right), & \alpha_k>0, \\
	0, & \alpha_k=0.
\end{cases}
\end{equation}
\end{lemma}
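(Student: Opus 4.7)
The plan is to recognize $f$ as the perspective transform of a scalar concave function and invoke the standard result that perspective preserves concavity. Fix indices $k,n$ and let $c = h_{k,n}/(\Gamma\sigma^2) > 0$. Define the auxiliary scalar function $g(x) = \log_2(1+cx)$ on $x\geq 0$. As a composition of $\log_2$ with an affine map, $g$ is concave. For $\alpha_k > 0$, direct algebra gives
\begin{equation}
f(q_{k,n},\alpha_k) \;=\; \alpha_k\, g\!\left(\frac{q_{k,n}}{\alpha_k}\right),
\end{equation}
which is precisely the perspective of $g$ evaluated at $(q_{k,n},\alpha_k)$.

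Next, I would invoke the textbook fact (e.g.\ Boyd--Vandenberghe, \S3.2.6) that if $g$ is concave on a convex domain, then its perspective $\tilde g(q,\alpha):=\alpha g(q/\alpha)$ is jointly concave on $\{(q,\alpha):\alpha>0,\, q/\alpha\in\mathrm{dom}(g)\}$. Applied here, this yields joint concavity of $f$ on the open half-plane $\{q_{k,n}\geq 0,\, \alpha_k>0\}$. It remains only to extend concavity across the boundary $\alpha_k = 0$. Since $\alpha\log_2(1+cq/\alpha)\to 0$ as $\alpha\downarrow 0$ for any fixed $q\geq 0$, the extension $f(q,0):=0$ is the continuous extension.

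The only real step is then to verify Jensen's inequality for convex combinations involving a boundary point. Take $(q_1,\alpha_1)$ with $\alpha_1>0$ and $(q_2,0)$ with $q_2\geq 0$, and $\lambda\in(0,1]$; I would compute $f(\lambda q_1+(1-\lambda)q_2,\, \lambda\alpha_1)$ explicitly, and after cancelling the common factor $\lambda\alpha_1$ the required inequality reduces to $\log_2\!\bigl(1+c(\lambda q_1+(1-\lambda)q_2)/(\lambda\alpha_1)\bigr)\geq \log_2(1+cq_1/\alpha_1)$, which follows from monotonicity of $\log_2$ and $q_2\geq 0$. The case $\alpha_1=\alpha_2=0$ is trivial since both sides vanish. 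Combined with the interior concavity, this establishes joint concavity on the closed cone $\{q_{k,n}\geq 0,\, \alpha_k\geq 0\}$.

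I expect no deep obstacle: the perspective-preserves-concavity fact does all the heavy lifting. The only subtlety is the boundary $\alpha_k=0$, because the perspective result is usually stated on an open half-space; the elementary case check above patches this gap.
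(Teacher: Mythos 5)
Your proposal is correct and follows essentially the same route as the paper: the interior case $\alpha_k>0$ is handled by the perspective-preserves-concavity fact from Boyd--Vandenberghe, and the boundary cases with $\alpha_k=0$ are patched by the same explicit Jensen-inequality computation (reducing to monotonicity of $\log_2$ after cancelling the factor $\lambda\alpha_1$) that the paper uses in its cases 2--4.
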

\begin{proof}
Please refer to Appendix \ref{appendix:proof 1}.
\end{proof}

From Lemma \ref{lemma:1}, as a non-negative weighted sum of $f(q_{k,n},\alpha_k)$, the new objective function of Problem (\ref{prob:TS-1}) is
jointly concave in $\{\alpha_k\}$ and $\{q_{k,n}\}$. Since the constraints are now all affine, Problem (\ref{prob:TS-1}) is convex, and thus can be optimally solved by applying the Lagrange duality method, as will be shown next.

The Lagrangian of Problem (\ref{prob:TS-1}) is given by
\begin{small}
\begin{align}\label{eq:Lagrangian-TS}
& \mathcal{L}\left(\{q_{k,n}\},\{\alpha_k\},\{\lambda_i\},\mu,\nu\right) \nonumber\\
   & ~~ =\frac{1}{N}\sum\limits_{k=1}^{K}\sum\limits_{n=1}^{N}w_k\alpha_k\log_2\left(1+\frac{h_{k,n}q_{k,n}}{\N \alpha_k}\right) \nonumber\\
   & ~~~~   + \sum\limits_{i=1}^{K}\lambda_i\left(\zeta\sum\limits_{k\neq i}^{K+1}\sum\limits_{n=1}^{N}h_{i,n}q_{k,n}-\overline{E}_i\right)   \nonumber\\
    & ~~~~   + \mu\left(P-\sum\limits_{k=1}^{K+1}\sum\limits_{n=1}^{N}q_{k,n}\right)
        + \nu\left(1-\sum\limits_{k=1}^{K+1}\alpha_k\right)
\end{align}
\end{small}%
where $\lambda_i,i=1,\ldots,K$, $\mu$, and $\nu$ are the non-negative dual variables associated with the corresponding constraints in (\ref{prob:TS-1}). The dual function
$g\left(\{\lambda_i\},\mu,\nu\right)$ is then defined as the optimal value of the following problem.
\begin{align}\label{eq:dual func-TS}
~\mathop{\mathtt{max.}}_{\{q_{k,n}\},\{\alpha_k\}} & ~~ \mathcal{L}\left(\{q_{k,n}\},\{\alpha_k\},\{\lambda_i\},\mu,\nu\right) \nonumber \\
\mathtt{s.t.} & ~~ 0\leq q_{k,n}\leq \alpha_k \Peak, ~ k=1,\ldots,K+1,\forall n,\nonumber \\
              & ~~ 0\leq\alpha_k\leq 1,~~ k=1,\ldots,K+1.
\end{align}
The dual problem is thus defined as $\min_{\{\lambda_i\},\mu,\nu} g\left(\{\lambda_i\},\mu,\nu\right)$.

First, we consider the maximization problem in (\ref{eq:dual func-TS}) for obtaining $g\left(\{\lambda_i\},\mu,\nu\right)$ with a given set of $\{\lambda_i\}$, $\mu$, and $\nu$.
We define $\mathcal{L}_k,k=1,\ldots,K+1$, as shown in (\ref{eq:Lk}) at the top of this page.
\begin{figure*}[ht]
\begin{small}
\begin{align}\label{eq:Lk}
  \mathcal{L}_{k}:=\begin{cases}
  \frac{w_k\alpha_k}{N}\sum\limits_{n=1}^{N}\log_2\left(1+\frac{h_{k,n}q_{k,n}}{\N \alpha_k}\right)
    +\zeta\sum\limits_{i\neq k}^{K}\lambda_i \sum\limits_{n=1}^{N}h_{i,n}q_{k,n}
    -\mu\sum\limits_{n=1}^{N}q_{k,n}-\nu \alpha_k, & k=1,\ldots,K  \\
  \zeta\sum\limits_{i=1}^{K}\lambda_i \sum\limits_{n=1}^{N}h_{i,n}q_{k,n}
    -\mu\sum\limits_{n=1}^{N}q_{k,n}-\nu \alpha_k, & k=K+1.
           \end{cases}
\end{align}
\end{small}\hrulefill
\end{figure*}
Then for the Lagrangian in (\ref{eq:Lagrangian-TS}), we have
\begin{equation}\label{}
  \mathcal{L}=\sum\limits_{k=1}^{K+1}\mathcal{L}_k-\sum\limits_{i=1}^{K}\lambda_i \overline{E}_i +\mu P+\nu.
\end{equation}
Thus, for each given $k$, the maximization problem in (\ref{eq:dual func-TS}) can be decomposed as
\begin{align}\label{prob:TS-Lk}
~\mathop{\mathtt{max.}}_{\{q_{k,n}\},\alpha_k} & ~~ \mathcal{L}_{k} \nonumber\\
\mathtt{s.t.} & ~~ 0\leq q_{k,n}\leq \alpha_k \Peak, ~ n=1,\ldots,N \nonumber\\
              & ~~ 0\leq\alpha_k\leq 1.
\end{align}

We first study the solution for Problem  (\ref{prob:TS-Lk}) with given $k=1,\ldots,K$.
From (\ref{eq:Lk}), we have
\begin{equation}\label{eq:diff Lk}
  \frac{\partial \mathcal{L}_{k}}{\partial q_{k,n}} =
    \frac{w_k\alpha_k h_{k,n}}{N(\N \alpha_k+h_{k,n}q_{k,n})\ln 2}
        +\zeta\sum\limits_{i\neq k}^{K}\lambda_i h_{i,n}-\mu,  \forall n.
\end{equation}
Given $\alpha_k,k=1,\ldots,K$, the $q_{k,n},n=1,\ldots,N$ that maximizes $\mathcal{L}_{k}$ can be obtained by
setting $\frac{\partial \mathcal{L}_{k}}{\partial q_{k,n}}=0$ to give

\begin{small}
\begin{equation}\label{eq:PA-TS}
      q_{k,n} = \alpha_k \min\left(\left(\frac{w_k}{N\left(\mu-\zeta\sum\limits_{i\neq k}^{K}\lambda_i h_{i,n}\right)\ln 2} -\frac{\N }{h_{k,n}}\right)^+,\Peak \right)
\end{equation}
\end{small}%
where $(x)^+\triangleq\max(0,x)$.
For given $\{q_{k,n}\}$, it appears that there is no closed-form expression for the optimal $\alpha_k$ that maximizes $\mathcal{L}_k$. However, since $\mathcal{L}_k$ is a concave function of $\alpha_k$ with given $\{q_{k,n}\}$, $\alpha_k$ can be found numerically by a simple bisection search over $0\leq\alpha_k\leq1$.
To summarize, for given $k=1,\ldots,K$, Problem (\ref{prob:TS-Lk}) can be solved by iteratively optimizing between $\{q_{k,n}\}$ and $\alpha_k$ with one of them fixed at one time, which is known as block-coordinate descent method \cite{Rich}.

Next, we study the solution for Problem  (\ref{prob:TS-Lk}) for $k=K+1$, i.e., for the power slot, which is a LP. Define the set $\mathcal{N}_1=\left\{n\in\mathcal{N}: \zeta\sum\limits_{i=1}^K \lambda_i h_{i,n}-\mu>0\right\}$.
From (\ref{eq:Lk}), to maximize $\mathcal{L}_{K+1}$ we have
\begin{align}\label{eq:q_K+1,n}
q_{K+1,n}=\begin{cases}
	\alpha_{K+1} \Peak, & n\in\mathcal{N}_1, \\
	0, & n\in\mathcal{N}\backslash\mathcal{N}_1
\end{cases}
\end{align}
and
\begin{align}\label{eq:alpha-K+1}
\alpha_{K+1}=\begin{cases}
	1, & {\rm if} \sum\limits_{n\in\mathcal{N}_1}\left(\zeta\sum\limits_{i=1}^K\lambda_i h_{i,n}-\mu\right)\Peak -\nu >0, \\
	0, & {\rm otherwise}.
\end{cases}
\end{align}

After obtaining $g\left(\{\lambda_i\},\mu,\nu\right)$ with given $\{\lambda_i\}$, $\mu$, and $\nu$, the minimization of $g\left(\{\lambda_i\},\mu,\nu\right)$ over $\{\lambda_i\}$, $\mu$, and $\nu$ can be efficiently solved by the ellipsoid method \cite{Boydnote}. A subgradient of this problem required for the ellipsoid method is provided by the following proposition.

\begin{proposition}\label{proposition-subgradient}
For Problem (\ref{prob:TS-1}) with a dual function $g\left(\{\lambda_i\},\mu,\nu\right)$, the following choice of $\mv{d}$
is a subgradient for $g\left(\{\lambda_i\},\mu,\nu\right)$:
\begin{align}\label{eq:subgradient-TS}
d_i=\begin{cases}
	\zeta\sum\limits_{k\neq i}^{K+1}\sum\limits_{n=1}^{N}h_{i,n}\dot{q}_{k,n}-\overline{E}_i, &  ~~ i=1,\ldots,K, \\
	P-\sum\limits_{k=1}^{K+1}\sum\limits_{n=1}^{N}\dot{q}_{k,n}, & ~~ i=K+1,\\
    1-\sum\limits_{k=1}^{K+1}\dot{\alpha}_k, & ~~ i=K+2.
\end{cases}
\end{align}
where $\{\dot{q}_{k,n}\}$ and $\{\dot{\alpha}_k\}$ is the solution of the maximization problem (\ref{eq:dual func-TS})
with given $\{\lambda_i\},\mu$ and $\nu$.
\end{proposition}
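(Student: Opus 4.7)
The plan is to use the standard argument that the dual function of a maximization problem is convex as a pointwise supremum of affine functions in the dual variables, and that the coefficients of this affine function — evaluated at any maximizer of the inner problem — supply a subgradient.

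First I would observe that for any fixed primal variables $(\{q_{k,n}\},\{\alpha_k\})$, the Lagrangian in (\ref{eq:Lagrangian-TS}) is affine (in fact, linear plus a constant) in the dual vector $\mv{y}=(\{\lambda_i\},\mu,\nu)$. Consequently, $g(\mv{y})=\max_{(\{q_{k,n}\},\{\alpha_k\})\in\mathcal{X}}\mathcal{L}(\{q_{k,n}\},\{\alpha_k\},\mv{y})$, where $\mathcal{X}$ denotes the feasible set specified by $0\leq q_{k,n}\leq \alpha_k\Peak$ and $0\leq \alpha_k\leq 1$, is a pointwise supremum of affine functions in $\mv{y}$ and hence convex. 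Therefore a legitimate notion of subgradient for $g$ exists, and it suffices to exhibit a vector $\mv{d}$ satisfying $g(\mv{y})\geq g(\mv{y}_0)+\mv{d}^{T}(\mv{y}-\mv{y}_0)$ for all $\mv{y}$ in the feasible dual domain.

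Next I would fix a reference point $\mv{y}_0=(\{\lambda_i^0\},\mu^0,\nu^0)$ and let $(\{\dot{q}_{k,n}\},\{\dot{\alpha}_k\})$ denote an optimizer of the inner maximization (\ref{eq:dual func-TS}) at $\mv{y}_0$, so that $g(\mv{y}_0)=\mathcal{L}(\{\dot{q}_{k,n}\},\{\dot{\alpha}_k\},\mv{y}_0)$. For any other dual point $\mv{y}$, since $(\{\dot{q}_{k,n}\},\{\dot{\alpha}_k\})$ is feasible for (\ref{eq:dual func-TS}) (but not necessarily optimal at $\mv{y}$), we have
\begin{equation*}
g(\mv{y})\;\geq\;\mathcal{L}(\{\dot{q}_{k,n}\},\{\dot{\alpha}_k\},\mv{y}).
\end{equation*}
Because $\mathcal{L}$ is affine in $\mv{y}$ at fixed primal variables, it equals its first-order Taylor expansion exactly:
\begin{equation*}
\mathcal{L}(\{\dot{q}_{k,n}\},\{\dot{\alpha}_k\},\mv{y})=\mathcal{L}(\{\dot{q}_{k,n}\},\{\dot{\alpha}_k\},\mv{y}_0)+\mv{d}^{T}(\mv{y}-\mv{y}_0),
\end{equation*}
where the components of $\mv{d}$ are the coefficients of $\lambda_i$, $\mu$, and $\nu$ in (\ref{eq:Lagrangian-TS}) evaluated at $(\{\dot{q}_{k,n}\},\{\dot{\alpha}_k\})$. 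A term-by-term inspection of (\ref{eq:Lagrangian-TS}) gives precisely $\zeta\sum_{k\neq i}^{K+1}\sum_{n}h_{i,n}\dot{q}_{k,n}-\overline{E}_i$ as the coefficient of $\lambda_i$, $P-\sum_{k}\sum_{n}\dot{q}_{k,n}$ as the coefficient of $\mu$, and $1-\sum_{k}\dot{\alpha}_k$ as the coefficient of $\nu$, which is exactly the definition in (\ref{eq:subgradient-TS}). Chaining the two displayed inequalities then yields the subgradient inequality
\begin{equation*}
g(\mv{y})\;\geq\;g(\mv{y}_0)+\mv{d}^{T}(\mv{y}-\mv{y}_0),
\end{equation*}
completing the proof.

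There is essentially no hard step in this argument; the only mild subtlety is making sure the case $\alpha_k=0$ does not spoil the affinity in the dual variables. This is handled by the continuous extension of $f(q_{k,n},\alpha_k)$ used in Lemma~\ref{lemma:1}, which concerns only the dependence on primal variables and does not affect linearity in $(\{\lambda_i\},\mu,\nu)$. Hence the coefficient-extraction step is unambiguous, and the proposition follows directly.
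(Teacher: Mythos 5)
Your proposal is correct and follows essentially the same argument as the paper's proof in Appendix B: evaluate the Lagrangian at the inner maximizer $(\{\dot{q}_{k,n}\},\{\dot{\alpha}_k\})$, use $g(\hat{\mv{y}})\geq\mathcal{L}(\{\dot{q}_{k,n}\},\{\dot{\alpha}_k\},\hat{\mv{y}})$, and exploit affinity of $\mathcal{L}$ in the dual variables to read off the subgradient components. No gaps.
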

\begin{proof}
Please refer to Appendix \ref{appendix:proof 2}.
\end{proof}

Note that the optimal $q_{k,n}^\ast,k=1\ldots,K,n=1,\ldots,N$ and $\alpha_k^\ast,k=1,\ldots,K$ are obtained at optimal
$\{\lambda_i^\ast\}$, $\mu^\ast$, and $\nu^\ast$.
Given $\{q_{k,n}\}$, the objective function in Problem (\ref{prob:TS-1}) is an increasing function of $\alpha_k,k=1,\ldots,K$. Thus, the optimal $\alpha_k^\ast$'s, $k=1,\ldots,K+1$ satisfy
$\sum\limits_{k=1}^{K+1}\alpha_k^\ast=1$; otherwise, the objective can be improved by increasing some of the $\alpha_k$'s, $k=1,\ldots,K$. Then, the optimal $\alpha_{K+1}$ is given by $\alpha_{K+1}^\ast = 1-\sum\limits_{k=1}^K \alpha_k^\ast$.
With $\alpha_k=\alpha_k^\ast,k=1,\ldots,K+1$, $q_{k,n}=q_{k,n}^\ast,k=1,\ldots,K,n=1,\ldots,N$, Problem (\ref{prob:TS-1}) becomes a LP with variables $\{q_{K+1,n}\}$. The optimal values of $\{q_{K+1,n}\}$ are obtained by solving this LP.

To summarize, one algorithm to solve (P-TS) is given in Table \ref{table1}.
For the algorithm given in Table \ref{table1}, the computation time is dominated by the ellipsoid method in steps I)-III) and the LP in step V). In particular, the time complexity of steps 1)-3) is of order $K^2N$, step 4) is of order $N$, step 5) is of order $K^2N$, and step 6) is of order $K^2$. Thus, the time complexity of steps 1)-6) is of order $K^2N$, i.e., $\mathcal{O}(K^2N)$. Note that step II) iterates $\mathcal{O}(K^2)$ to converge \cite{Boydnote}, thus the time complexity of steps I)-III) is $\mathcal{O}(K^4N)$. The time complexity of the LP in step V) is $\mathcal{O}(KN^2+N^3)$ \cite{Boydbook}. Therefore, the time complexity of the algorithm in Table \ref{table1} is $\mathcal{O}(K^4N+KN^2+N^3)$.
\begin{table}[htp]
\begin{center}
\caption{Algorithm for Solving Problem (P-TS)}
\vspace{0.1cm}
\hrule
\vspace{0.1cm}
\begin{itemize}
\item[I)] Initialize $\{\lambda_i>0\}$, $\mu>0$ and $\nu>0$.
\item[II)] {\bf repeat}
    \begin{itemize}
    \item[1)] Initialize $\alpha_k=1/K,k=1,\ldots,K$.
    \item[2)] {\bf repeat}
        \begin{itemize}
            \item [a)] For $k=1,\ldots,K$, compute $\{q_{k,n}\}$ by (\ref{eq:PA-TS}).
            \item [b)] For $k=1,\ldots,K$, obtain $\alpha_k$ that maximizes $\mathcal{L}_k$ with given $\{q_{k,n}\}$ by bisection search.
        \end{itemize}
    \item[3)] {\bf until} improvement of $\mathcal{L}_k,k=1,\ldots,K$ converges to a prescribed accuracy.
    \item[4)] Compute $\{q_{K+1,n}\}$ and $\alpha_{K+1}$ by (\ref{eq:q_K+1,n}) and (\ref{eq:alpha-K+1}).
    \item[5)] Compute the subgradient of $g(\{\lambda_i\},\mu,\nu)$ by (\ref{eq:subgradient-TS}).
    \item[6)] Update $\{\lambda_i\}$, $\mu$ and $\nu$ according to the ellipsoid method.
    \end{itemize}
\item[III)] {\bf until} $\{\lambda_i\}$, $\mu$ and $\nu$ converge to a prescribed accuracy.
\item[IV)] Set $q_{k,n}^\ast=q_{k,n},k=1,\ldots,K,n=1,\ldots,N$, $\alpha_k^\ast=\alpha_k,k=1,\ldots,K$ and $\alpha_{K+1}^\ast=1-\sum\limits_{k=1}^{K}\alpha_k^\ast$.
\item[V)] Obtain $q_{K+1,n}^\ast,n=1,\ldots,N$ by solving Problem (\ref{prob:TS-1}) with $\alpha_k=\alpha_k^\ast,k=1,\ldots,K+1$, $q_{k,n}=q_{k,n}^\ast,k=1,\ldots,K,n=1,\ldots,N$.
\item[VI)] For $k=1,\ldots,K+1$ and $n=1,\ldots,N$, if $\alpha_k^\ast>0$, set $p_{k,n}^\ast=q_{k,n}^\ast/\alpha_k^\ast$; if $\alpha_k^\ast=0$ and $q_{k,n}^\ast=0$, set $p_{k,n}^\ast=0$; if $\alpha_k^\ast=0$ and $q_{k,n}^\ast>0$, set $p_{k,n}^\ast\rightarrow\infty$.
\end{itemize}
\vspace{0.1cm} \hrule\label{table1}
\end{center}
\end{table}

Similar with the single-user case, we have the following proposition.
\begin{proposition}
In the case of a multiuser OFDM-based SWIPT system with $K\geq2$ and $\Peak\rightarrow\infty$, the maximum rate by the TS scheme, i.e., $R_{\rm TS}(P,\infty)$, is achieved by $\alpha_{K+1}=0$ or $\alpha_{K+1}\rightarrow 0$.
\end{proposition}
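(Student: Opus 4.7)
My plan is to work with the convex reformulation~(\ref{prob:TS-1}) introduced just above Lemma~\ref{lemma:1}, where the substitution $q_{k,n}=\alpha_k p_{k,n}$ decouples $\alpha_{K+1}$ from the per-slot powers. The key structural observation is that $\alpha_{K+1}$ appears in neither the objective, nor the EH constraints, nor the total-power constraint; its only role is through the peak-power constraint $q_{K+1,n}\leq \alpha_{K+1}\Peak$ and the time budget $\sum_{k=1}^{K+1}\alpha_k\leq 1$. Meanwhile, for each $k\leq K$ with $q_{k,n}\geq 0$ fixed, a direct derivative calculation (essentially $\ln(1+x)\geq x/(1+x)$) shows that $\alpha\mapsto\alpha\log_2(1+h_{k,n}q_{k,n}/(\N\alpha))$ is non-decreasing on $\alpha>0$, and strictly increasing whenever $q_{k,n}>0$. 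Consequently, enlarging the information-slot ratios $\alpha_1,\ldots,\alpha_K$ at the expense of $\alpha_{K+1}$ never hurts the objective and strictly helps as soon as some $q_{k,n}>0$ in the slot being enlarged.

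With this monotonicity in hand, starting from any optimal $(\{q_{k,n}^\ast\},\{\alpha_k^\ast\})$ with $\alpha_{K+1}^\ast>0$, I would pick any $\epsilon\geq\max_n q_{K+1,n}^\ast/\Peak$, keep all $q_{k,n}^\ast$ unchanged, reset $\alpha_{K+1}$ to $\epsilon$, and distribute the surplus $\alpha_{K+1}^\ast-\epsilon$ arbitrarily among $\alpha_1,\ldots,\alpha_K$. The EH and total-power constraints are untouched because they only involve $\{q_{k,n}\}$; the information-slot peak constraints $q_{k,n}^\ast\leq \alpha_k\Peak$ only relax when $\alpha_k$ grows; and the objective does not decrease. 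Since $\max_n q_{K+1,n}^\ast/\Peak\to 0$ as $\Peak\to\infty$, the smallest admissible $\epsilon$ tends to zero.

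This yields the dichotomy stated in the proposition. If the $K$ information slots alone can supply enough cross-user energy to satisfy $\zeta\sum_{i\neq k,\,i\leq K}\sum_n h_{k,n}q_{i,n}\geq\Emin$ for every $k$, then the reformulated problem admits an optimum with $q_{K+1,n}^\ast=0$ for every $n$ and hence $\alpha_{K+1}^\ast=0$; otherwise, some $q_{K+1,n}^\ast>0$ is necessary, and then along the family of problems indexed by $\Peak\to\infty$ one has $\alpha_{K+1}^\ast\to 0$ with $p_{K+1,n}^\ast=q_{K+1,n}^\ast/\alpha_{K+1}^\ast\to\infty$, matching the limiting regime flagged in the footnote to Problem~(\ref{prob:TS-1}) and in Remark~\ref{remark:Peak-infty}. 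The main technical nuisance I anticipate is making the degenerate limit $\alpha_{K+1}\rightarrow 0$ rigorous: one has to invoke the continuity convention $f(q,0)=0$ from Lemma~\ref{lemma:1} and argue that the supremum is actually attained as a limit along such a sequence, so that the phrase ``achieved by $\alpha_{K+1}\to 0$'' is meaningful.
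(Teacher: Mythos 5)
Your proposal is correct and follows essentially the same route as the paper's own proof: both exploit the reformulation with $q_{k,n}=\alpha_k p_{k,n}$, observe that with $\Peak\rightarrow\infty$ the EH and total-power constraints no longer involve the $\alpha_k$'s, use the monotonicity of $\alpha\mapsto\alpha\log_2(1+c/\alpha)$ to shift time from the power slot to the information slots, and arrive at the same dichotomy ($\alpha_{K+1}=0$ when $q_{K+1,n}^\ast=0$ for all $n$, and $\alpha_{K+1}\rightarrow0$ otherwise). Your version merely spells out the explicit perturbation and the derivative bound $\ln(1+x)\geq x/(1+x)$ that the paper leaves implicit.
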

\begin{proof}
In the equivalent Problem (\ref{prob:TS-1}) with $\Peak\rightarrow\infty$, the EH constraints and the total power constraint are independent of $\alpha_k,k=1,\ldots,K+1$. The objective in Problem (\ref{prob:TS-1}) is an increasing function of $\alpha_k,k=1,\ldots,K$ for given $\{q_{k,n}\}$. Thus, the maximum achievable rate is obtained by minimizing the time allocated to the power slot, i.e., $\alpha_{K+1}=0$ (when $q_{K+1,n}^\ast=0,\forall n$) or $\alpha_{K+1}\rightarrow 0$ (when $q_{K+1,n}^\ast>0$ for some $n$).
\end{proof}

It is worth noting that for the multiuser system with $K\geq2$ and $\Peak\rightarrow\infty$, it is possible that the maximum rate by the TS scheme is
achieved by $\alpha_{K+1}=0$, in which case no additional power slot is scheduled and all users simply harvest energy
at the slots scheduled for other users to transmit information. In contrast,
for the single-user $K=1$ case, the power slot is always needed if $\overline{E}>0$.

\begin{remark}\label{remark:conventional TDMA}
In Problem (\ref{prob:TS-1}), when $K\geq2$ and $\Emin=0,k=1,\ldots,K$, the system becomes a conventional TDMA system without EH constraints.
Assume that the harvesting energy at each user by the optimal transmission strategy for this system is given by $E_k^{\rm th},k=1,\ldots,K$.
Then for a system with $0\leq \Emin \leq E_k^{\rm th},k=1,\ldots,K$, the same rate as that by the conventional TDMA system can be achieved.
\end{remark}

\subsection{Power Splitting}\label{sec:solution-PS}
Since Problem (P-PS) is non-convex, the optimal solution may be computationally difficult to obtain.
Instead, we propose a suboptimal algorithm to this problem by
iteratively optimizing $\{p_n\}$ and $\{\Pi(n)\}$ with fixed $\{\rho_k\}$, and optimizing $\{\rho_k\}$
with fixed $\{p_n\}$ and $\{\Pi(n)\}$.

Note that (P-PS) with given $\{p_n\}$ and $\{\Pi(n)\}$ is a convex problem, of which the objective function is a nonincreasing function of $\rho_k,\forall k$. Thus, the optimal power splitting ratio solution for (P-PS) with a given set of feasible $\{p_n\}$ and $\{\Pi(n)\}$ is obtained as
\begin{equation}\label{eq:rho}
  \rho_k=\frac{\Emin}{\zeta\sum\limits_{n=1}^{N} h_{k,n} p_n}, ~~ k=1,\ldots,K.
\end{equation}

Next, consider (P-PS) with a given set of feasible $\rho_k$'s, i.e.,
\begin{align}
~\mathop{\mathtt{max.}}_{\{p_n\},\{\Pi(n)\}} & ~~ \frac{1}{N}\sum\limits_{n=1}^{N}w_{\Pi(n)}\log_2 \left(1+\frac{h_{\Pi(n),n}^{\rm ID}p_n}{\N}\right) \nonumber \\
\mathtt{s.t.} & ~~ \zeta\sum\limits_{n=1}^{N} \Heh p_n \geq \Emin, ~~ k=1,\ldots,K,   \nonumber\\
              & ~~ \sum\limits_{n=1}^{N} p_n \leq P, ~~ 0\leq p_n\leq \Peak,n=1,\ldots,N  \label{prob:PS-rho}
\end{align}
where $\Hid \triangleq (1-\rho_k)h_{k,n},\forall k,n$ and $\Heh \triangleq\rho_k h_{k,n},\forall k,n$ can be viewed as the equivalent channel power gains for the information and energy receivers, respectively.
The problem in (\ref{prob:PS-rho}) is non-convex, due to the integer SC allocation $\Pi(n)$.
However, it has been shown that the duality gap of a similar problem to (\ref{prob:PS-rho}) without the harvested energy constrains converges to zero as the number of SCs, $N$, increases to infinity \cite{Cioffi,Yu}.
Thus, we solve Problem (\ref{prob:PS-rho}) by applying the Lagrange duality method assuming that it has a zero duality gap.\footnote{In our simulation setup considered in Section \ref{sec:numerical} with $N=64$, the duality gap of Problem (\ref{prob:PS-rho}) is observed to be negligibly small and thus can be ignored.}

The Lagrangian of Problem (\ref{prob:PS-rho}) is given by

\begin{small}\vspace{-0.1in}
\begin{align}\label{eq:Lagrangian}
&\mathcal{L}\left(\{p_n\},\{\Pi(n)\},\{\lambda_k\},\mu\right)= \nonumber\\
 &   \frac{1}{N}\sum\limits_{n=1}^{N}w_{\Pi(n)}\log_2 \left(1+\frac{h_{\Pi(n),n}^{\rm ID}p_n}{\N}\right) \nonumber\\
   &     + \sum\limits_{k=1}^{K}\lambda_k\left( \zeta\sum\limits_{n=1}^{N} \Heh p_n-\Emin\right)
        + \mu\left( P-\sum\limits_{n=1}^{N} p_n \right)
\end{align}
\end{small}%
where $\lambda_k$'s and $\mu$ are the non-negative dual variables associated with the corresponding constraints in (\ref{prob:PS-rho}).
The dual function is then defined as
\begin{equation}\label{eq:dual func}
   g\left(\{\lambda_k\},\mu\right)=\max\limits_{\{p_n\},\{\Pi(n)\}} \mathcal{L}\left(\{p_n\},\{\Pi(n)\},\{\lambda_k\},\mu\right).
\end{equation}
The dual problem is thus given by $\min_{\{\lambda_k\},\mu} g\left(\{\lambda_k\},\mu\right)$.

Consider the maximization problem in (\ref{eq:dual func}) for obtaining $g\left(\{\lambda_k\},\mu\right)$ with
a given set of $\{\lambda_k\}$ and $\mu$. For each given SC $n$, the maximization problem in (\ref{eq:dual func}) can be decomposed as
\begin{align}
~\mathop{\mathtt{max.}}_{p_n,\Pi(n)} & ~~ \mathcal{L}_n:=\frac{w_{\Pi(n)}}{N}\log_2 \left(1+\frac{h_{\Pi(n),n}^{\rm ID}p_n}{\N}\right) \nonumber\\
& ~~ ~~ ~~ ~~ +\zeta\sum\limits_{k=1}^{K}\lambda_k \Heh p_n -\mu p_n   \nonumber\\
\mathtt{s.t.} & ~~ 0\leq p_n\leq \Peak.  \label{eq:Ln}
\end{align}
Note that for the Lagrangian in (\ref{eq:Lagrangian}), we have
\begin{equation}\label{}
  \mathcal{L}=\sum\limits_{n=1}^{N}\mathcal{L}_n-\sum\limits_{k=1}^{K}\lambda_k \Emin +\mu P.
\end{equation}
From (\ref{eq:Ln}), we have
\begin{equation}\label{eq:differential L}
  \frac{\partial \mathcal{L}_n}{\partial p_n} = \frac{w_{\Pi(n)}h_{\Pi(n),n}^{\rm ID}}{N(\N+h_{\Pi(n),n}^{\rm ID}p_n)\ln 2}
        + \zeta\sum\limits_{k=1}^{K}\lambda_k \Heh -\mu.
\end{equation}
Thus, for any given SC allocation $\Pi(n)$, the optimal power allocation for Problem (\ref{eq:Ln}) is obtained as

\begin{footnotesize}
\begin{equation}\label{eq:PA}
      p_n^\ast(\Pi)=\min\left(\left(\frac{w_{\Pi(n)}}{N\left(\mu-\zeta\sum\limits_{k=1}^{K}\lambda_k \Heh \right)\ln 2} -\frac{\N}{h_{\Pi(n),n}^{\rm ID}}\right)^+,\Peak\right).
\end{equation}
\end{footnotesize}%
Thus, for each given $n$, the optimal SC allocation $\Pi^\ast(n)$ to maximize $\mathcal{L}_n$ can be obtained, which is shown in (\ref{eq:subcarrier}) at the top of next page.
\begin{figure*}[ht]
\begin{small}
\begin{align}\label{eq:subcarrier}
\Pi^\ast(n)=\arg\max\limits_{\Pi(n)}  \left( \frac{w_{\Pi(n)}}{N} \left( \log_2\left(\frac{w_{\Pi(n)}h_{\Pi(n),n}^{\rm ID}}{N\N \left(\mu-\zeta\sum\limits_{k=1}^{K}\lambda_k \Heh \right)\ln 2}\right)\right)^+
-\left(\frac{w_{\Pi(n)}}{N\ln 2}-\frac{\N \left(\mu-\zeta\sum\limits_{k=1}^{K}\lambda_k \Heh \right)}{h_{\Pi(n),n}^{\rm ID}}\right)^+ \right).
\end{align}
\end{small}\hrulefill
\end{figure*}
Note that (\ref{eq:subcarrier}) can be solved by exhaustive search over the user set $\{1,\ldots,K\}$.

After obtaining $g\left(\{\lambda_k\},\mu\right)$ with given $\{\lambda_k\}$ and $\mu$, the minimization of $g\left(\{\lambda_k\},\mu\right)$ over $\{\lambda_k\}$ and $\mu$ can be efficiently solved by the ellipsoid
method \cite{Boydnote}. A subgradient of this problem required for the ellipsoid method is provided by the following proposition.

\begin{proposition}
For Problem (\ref{prob:PS-rho}) with a dual function $g\left(\{\lambda_k\},\mu\right)$, the following choice of $\mv{d}$
is a subgradient for $g\left(\{\lambda_k\},\mu\right)$:
\begin{align}\label{eq:subgradient-PS}
d_k=\begin{cases}
	\zeta\sum\limits_{n=1}^{N} \Heh \dot{p}_n-\Emin, &  ~~ k=1,\ldots,K, \\
	P-\sum\limits_{n=1}^{N} \dot{p}_n , & ~~ k=K+1.
\end{cases}
\end{align}
where $\{\dot{p}_n\}$ is the solution of the maximization problem (\ref{eq:dual func}) with given $\{\lambda_k\}$ and $\mu$.
\end{proposition}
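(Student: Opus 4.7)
The plan is to verify the defining subgradient inequality for $g$ directly from its pointwise-maximum definition. Since the Lagrangian in (\ref{eq:Lagrangian}) is affine in $(\{\lambda_k\}, \mu)$ for any fixed primal point $(\{p_n\}, \{\Pi(n)\})$, the dual function in (\ref{eq:dual func}) is a pointwise supremum of affine functions and is therefore convex in $(\{\lambda_k\}, \mu)$, regardless of the non-convexity of the primal Problem (\ref{prob:PS-rho}). A vector $\mv{d}$ is a subgradient of $g$ at $(\{\lambda_k\}, \mu)$ precisely when
\[g(\{\lambda_k'\}, \mu') \geq g(\{\lambda_k\}, \mu) + \sum_{k=1}^{K}(\lambda_k' - \lambda_k)\, d_k + (\mu' - \mu)\, d_{K+1}\]
holds for all nonnegative $(\{\lambda_k'\}, \mu')$, and establishing this inequality for the stated $\mv{d}$ is the entire goal.

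First I would fix the dual point $(\{\lambda_k\}, \mu)$ and let $(\{\dot{p}_n\}, \{\dot{\Pi}(n)\})$ denote any maximizer of the inner problem in (\ref{eq:dual func}), so that $g(\{\lambda_k\}, \mu) = \mathcal{L}(\{\dot{p}_n\}, \{\dot{\Pi}(n)\}, \{\lambda_k\}, \mu)$. Since the feasible set of the inner maximization does not involve $(\{\lambda_k\}, \mu)$, at any other nonnegative point $(\{\lambda_k'\}, \mu')$ the pair $(\{\dot{p}_n\}, \{\dot{\Pi}(n)\})$ remains admissible, so the dual function evaluated at the new point is bounded below by the Lagrangian evaluated at the same primal pair with the new multipliers, i.e., $g(\{\lambda_k'\}, \mu') \geq \mathcal{L}(\{\dot{p}_n\}, \{\dot{\Pi}(n)\}, \{\lambda_k'\}, \mu')$. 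Next I would exploit the affinity of $\mathcal{L}$ in $(\{\lambda_k\}, \mu)$ to expand
\[\mathcal{L}(\{\dot{p}_n\}, \{\dot{\Pi}(n)\}, \{\lambda_k'\}, \mu') = \mathcal{L}(\{\dot{p}_n\}, \{\dot{\Pi}(n)\}, \{\lambda_k\}, \mu) + \sum_{k=1}^{K}(\lambda_k' - \lambda_k)\left(\zeta\sum_{n=1}^{N}\Heh \dot{p}_n - \Emin\right) + (\mu' - \mu)\left(P - \sum_{n=1}^{N}\dot{p}_n\right),\]
which follows directly from the explicit form of $\mathcal{L}$ in (\ref{eq:Lagrangian}). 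Identifying the bracketed coefficients with the $d_k$'s given in (\ref{eq:subgradient-PS}) and chaining with the lower bound on $g(\{\lambda_k'\}, \mu')$ from the previous step then produces the subgradient inequality.

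The argument is essentially mechanical once the max-over-affine structure is recognized, so no serious obstacle is expected. The only mild subtlety is the potential non-uniqueness of the maximizer $(\{\dot{p}_n\}, \{\dot{\Pi}(n)\})$, but this is benign: any such maximizer attains the current dual value and remains feasible for the perturbed Lagrangian, so every choice yields a valid subgradient. Notably, the derivation does not rely on convexity of the primal problem or on strong duality, which is why the same template applies even though (\ref{prob:PS-rho}) is non-convex through the integer SC allocation $\{\Pi(n)\}$; this is also precisely why an identical argument underlies Proposition \ref{proposition-subgradient} for the TS scheme.
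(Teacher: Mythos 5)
Your proposal is correct and follows essentially the same route as the paper: the paper omits this proof as ``similar to that of Proposition \ref{proposition-subgradient}'', whose Appendix~\ref{appendix:proof 2} argument is precisely your chain $g(\{\lambda_k'\},\mu')\geq \mathcal{L}(\{\dot{p}_n\},\{\dot{\Pi}(n)\},\{\lambda_k'\},\mu')$ followed by expanding the Lagrangian's affine dependence on the multipliers to read off the coefficients in (\ref{eq:subgradient-PS}). Your added remarks on non-uniqueness of the maximizer and on not needing primal convexity are accurate but not required.
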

\begin{proof}
The proof is similar as the proof of Proposition \ref{proposition-subgradient}, and thus is omitted.
\end{proof}

\begin{remark}\label{remark:P-UB}
The optimal solution for (P-UB) can be obtained by setting $\Heh=\Hid=h_{k,n},\forall k,n$ in Problem (\ref{prob:PS-rho}).
Hence, the above developed solution is also applicable for Problem (P-UB).
\end{remark}

For (P-PS) with given $\{\rho_k\}$, the optimal $\{p_n\}$ and $\{\Pi(n)\}$ are obtained by
(\ref{eq:PA}) and (\ref{eq:subcarrier}), respectively.
Define the corresponding optimal value of Problem (\ref{prob:PS-rho}) as $R(\mv{\rho})$, where $\mv{\rho}=[\rho_1,\ldots,\rho_K]^T$.
Then $R(\mv{\rho})$ can be increased by optimizing $\rho_k$'s by (\ref{eq:rho}).
The above procedure can be iterated until $R(\mv{\rho})$ cannot be further improved.
Note that Problem (\ref{prob:PS-rho}) is guaranteed to be feasible at each iteration, provided that the initial $\rho_k$'s are feasible, since at each iteration we simply decrease $\rho_k$'s to make all the harvested energy constraints tight.
Thus, with given initial $\{\rho_k\}$, the iterative algorithm is guaranteed to converge to a local optimum of (P-PS) when all the harvested energy constraints in (\ref{prob:PS-rho}) are tight.

Note that the above local optimal solution depends on the choice of initial $\{\rho_k\}$.
To obtain a robust performance, we randomly generate $M$ feasible $\{\rho_k\}$ as the initialization steps, where $M$ is a sufficiently large number.\footnote{In general, as the number of users increases, the number of initialization steps needs to be increased to guarantee the robustness and optimality of the algorithm. However, large number of initialization steps increases the computation complexity, which may not be suitable for real-time applications.}
For each initialization step, the iterative algorithm is applied to obtain a local optimal solution for (P-PS). The final solution is selected as the one that achieves the maximum weighted sum-rate from all the solutions.

To summarize, the above iterative algorithm to solve (P-PS) is given in Table \ref{table3}.
For the algorithm given in Table \ref{table3}, the computation time is dominated by the ellipsoid method in steps A)-C). In particular, in step B), the time complexity of step a) is of order $KN$, step b) is of order $KN$, and step c) is of order $K^2$. Thus, the time complexity of steps a)-c) is of order $K^2+KN$, i.e., $\mathcal{O}(K^2+KN)$. Note that step B) iterates $\mathcal{O}(K^2)$ to converge \cite{Boydnote}, thus the time complexity of the ellipsoid method is $\mathcal{O}(K^4+K^3N)$. Considering further the number of initialization steps $M$, the time complexity of the algorithm in Table \ref{table3} is $\mathcal{O}(K^4M+K^3NM)$.
\begin{table}[htp]
\begin{center}
\caption{Iterative algorithm for Solving Problem (P-PS)}
\vspace{0.1cm}
\hrule
\vspace{0.1cm}
\begin{itemize}
\item[I)] Randomly generate $M$ feasible $\{\rho_k\}$ as different initialization steps.
\item[II)] For each initialization step:
\begin{itemize}
\item[1)] Initialize $\{\rho_k\}$.
\item[2)] {\bf repeat}
    \begin{itemize}
        \item [A)] Compute $\{\Heh\}$ and $\{\Hid\}$. Initialize $\{\lambda_k>0\}$ and $\mu>0$.
        \item [B)] {\bf repeat}
            \begin{itemize}
                \item [a)] Compute $\{p_n\}$ and $\{\Pi(n)\}$ by (\ref{eq:PA}) and (\ref{eq:subcarrier}) with given $\{\lambda_k\}$ and $\mu$.
                \item [b)] Compute the subgradient of $g(\{\lambda_k\},\mu)$ by (\ref{eq:subgradient-PS}).
                \item [c)] Update $\{\lambda_k\}$ and $\mu$ according to the ellipsoid method.
            \end{itemize}
        \item [C)] {\bf until} $\{\lambda_k\}$ and $\mu$ converge to a prescribed accuracy.
        \item [D)] Update $\{\rho_k\}$ by (\ref{eq:rho}) with fixed $\{p_n\}$ and $\{\Pi(n)\}$.
    \end{itemize}
\item[3)] {\bf until} $\left|\zeta\sum\limits_{n=1}^{N}\Heh p_n-\Emin\right|<\delta,\forall k$, where $\delta>0$ controls the algorithm accuracy.
\end{itemize}
\item[III)] Select the one that achieves the maximum weighted sum-rate from the $M$ solutions.
\end{itemize}
\vspace{0.1cm} \hrule\label{table3}
\end{center}
\end{table}

\subsection{Performance Comparison}\label{sec:numerical}
We provide simulation results under a practical system setup.
For each user, we use the same parameters as the single-user case with $N=64$ in Section \ref{sec:solution-single}. The channels for different users are generated independently.
In addition, it is assumed that $w_k=1,\forall k$, i.e., sum-rate maximization is considered.
The minimum harvested energy is assumed to be the same for all users, i.e., $\Emin=\overline{E},\forall k$.
The number of initialization steps $M$ is set to be 100.

\begin{figure}
\begin{center}
\scalebox{0.6}{\includegraphics{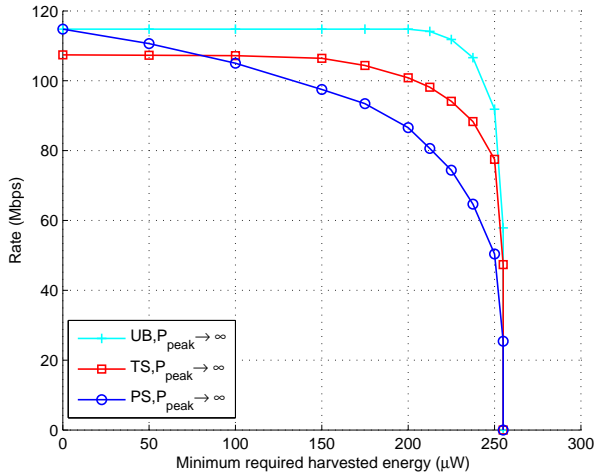}}
\end{center}
\caption{Achievable rates versus minimum required harvested energy in a multi-user OFDM-based SWIPT system, where $K=4$, $N=64$, and $\Peak\rightarrow\infty$.}
\label{fig:RE-K4-NoPeak}
\end{figure}

\begin{figure}
\begin{center}
\scalebox{0.6}{\includegraphics{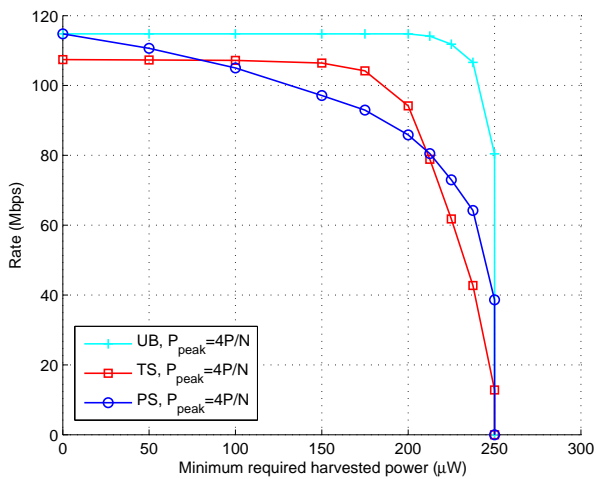}}
\end{center}
\caption{Achievable rates versus minimum required harvested energy in a multi-user OFDM-based SWIPT system, where $K=4$, $N=64$, and $\Peak=4P/N$.}
\label{fig:RE-K4-Peak}
\end{figure}

Figs. \ref{fig:RE-K4-NoPeak} and \ref{fig:RE-K4-Peak} show the achievable rates versus the minimum required harvested energy by different schemes with $K=4$. We assume $\Peak\rightarrow\infty$ in Fig. \ref{fig:RE-K4-NoPeak}, and
$\Peak=4P/N$ in Fig. \ref{fig:RE-K4-Peak}. Fig. \ref{fig:alpha} shows the time ratio of the EH slot versus minimum required harvested energy for the TS scheme in Fig. \ref{fig:RE-K4-Peak}.
In Fig. \ref{fig:RE-K4-NoPeak} with $\Peak \rightarrow\infty$, it is observed that when $\overline{E}>0$, the achievable rates by both TS and PS are less than the upper bound. For the TS scheme, the maximum rate is achieved when $\overline{E}$ is less than 150$\mu$W (c.f. Remark \ref{remark:conventional TDMA});
when $\overline{E}$ is larger than 150$\mu$W, the achievable rate decreases as $\overline{E}$ increases. For the PS scheme, the achievable rate decreases as $\overline{E}$ increases, since for larger $\overline{E}$ more power needs to be split for EH at each receiver.
Comparing the TS and PS schemes, it is observed that for sufficiently small $\overline{E}$ ($0\leq\overline{E}\leq80\mu$W), the achievable rate by PS is larger than that by TS.
This is because that when the required harvested energy is sufficiently small, only a small portion of power needs to be split for energy harvesting, and the PS scheme may take the advantage of the frequency diversity by subcarrier allocation.
For sufficiently large $\overline{E}$ ($80<\overline{E}\leq255\mu$W), it is observed that the achievable rate by TS is larger than that by PS.
In Fig. \ref{fig:RE-K4-Peak} with $\Peak=4P/N$, it is observed that when $\overline{E}$ is sufficiently large, the TS scheme becomes worse than the PS scheme. This is because that for a finite peak power constraint on each SC, as $\overline{E}$ becomes sufficiently large, the TS scheme needs to schedule a nonzero EH slot to ensure all users harvest sufficient energy (see Fig. \ref{fig:alpha}), the total information transmission time $1-\alpha_{K+1}$ then decreases and results in a degradation of achievable rate.
However, for $80<\overline{E}\leq208\mu$W, in which case the system achieves large achievable rate (larger than $70\%$ of UB) while each user harvests a reasonable value of energy (about $32\%$ to $84\%$ of the maximum possible value), the TS scheme still outperforms the PS scheme.

\begin{figure}
\begin{center}
\scalebox{0.6}{\includegraphics{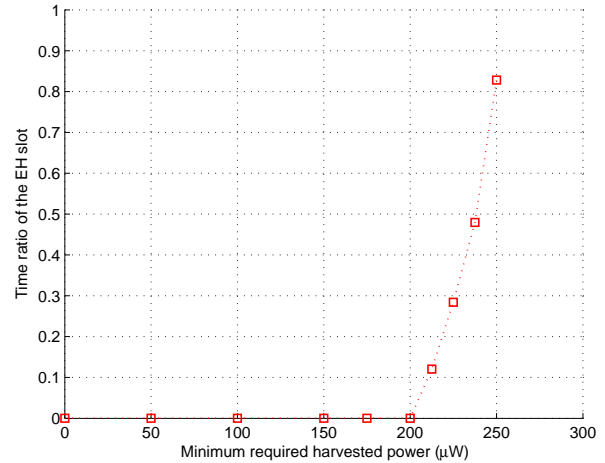}}
\end{center}
\caption{Time ratio of the EH slot versus minimum required harvested energy for the TS scheme in Fig. \ref{fig:RE-K4-Peak}.}
\label{fig:alpha}
\end{figure}

\begin{figure}
\begin{center}
\scalebox{0.6}{\includegraphics{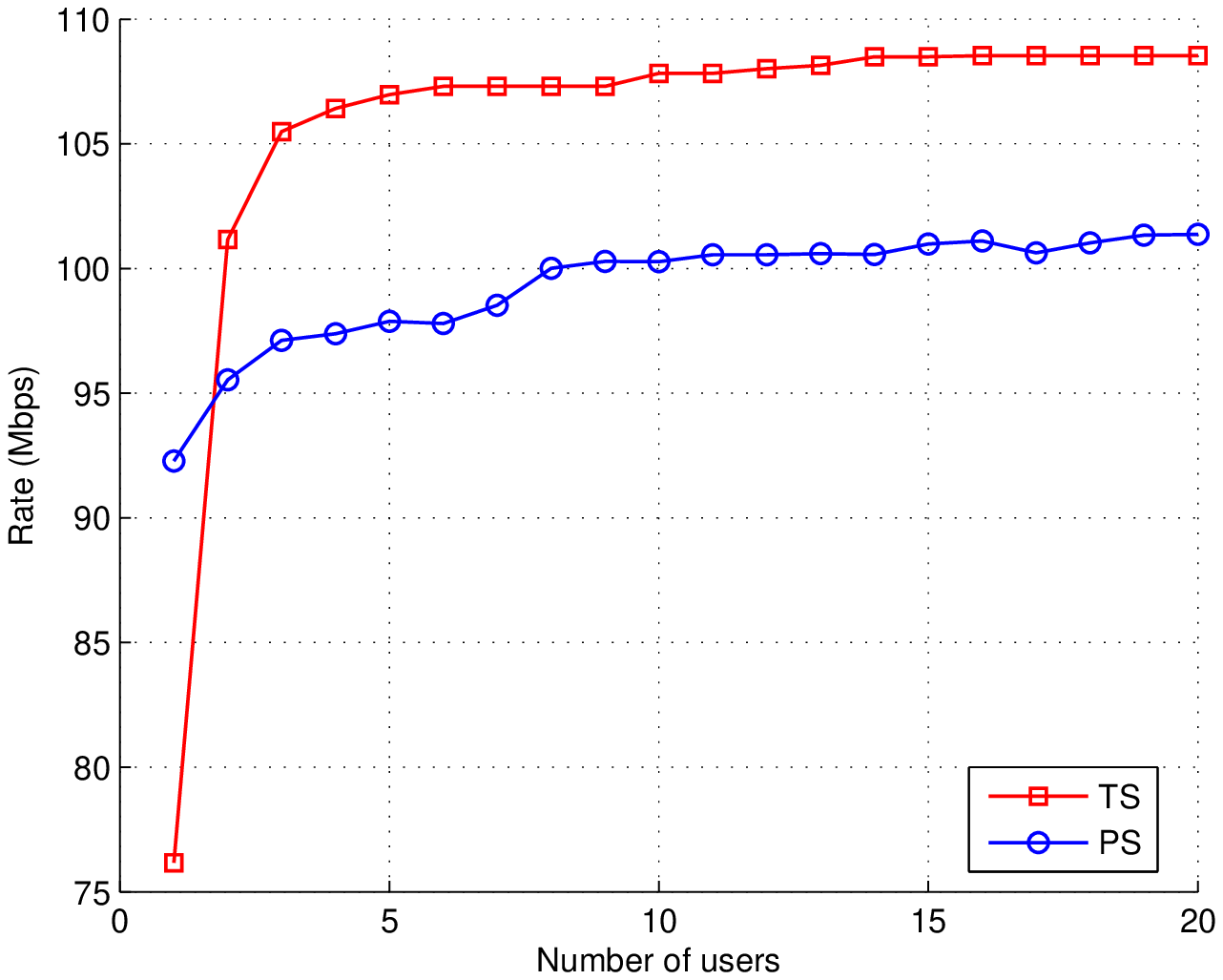}}
\end{center}
\caption{Achievable rates versus number of users, where $\Peak=4P/N$ and $\Emin=\overline{E}=150\mu$W.}
\label{fig:RateOverK}
\end{figure}

Fig. \ref{fig:RateOverK} shows the achievable rates versus the number of users by different schemes under fixed minimum required harvested energy $\Emin=\overline{E}=150\mu$W and $\Peak=4P/N$. In Fig. \ref{fig:RateOverK}, it is observed that for both TS and PS schemes, the achievable rate increases as the number of users increases,
and the rate tends to be saturated due to the bandwidth and the transmission power of the system is fixed.
In particular, for the TS scheme, the achievable rate with $K=2$ is much
larger (about $32.8\%$) than that with $K=1$. This is because that for the case $K=2$, one of the user decodes information when the other user is harvesting energy; however, for the single-user case $K=1$, the transmission time when the user is harvesting energy is not utilized for information transmission.
It is also observed in Fig. \ref{fig:RateOverK} that for a general multiuser system with large $K\geq 2$, the TS scheme outperforms the PS scheme.
This is intuitively due to the fact that as the number of users increases, the portion of energy discarded at the information receiver at each user after power splitting also becomes larger (c.f. Fig. \ref{fig:PA-PS}), hence using PS becomes inefficient for large $K$.


\section{Conclusion}\label{sec:conclusion}
This paper has studied the resource allocation optimization for a multiuser OFDM-based downlink SWIPT system. Two transmission schemes are investigated, namely, the TDMA-based information transmission with TS applied at each receiver, and the OFDMA-based information transmission with PS applied at each receiver. In both cases, the weighted sum-rate is maximized subject to a given set of harvested energy constraints as well as the peak and/or total transmission power constraint.
Our study suggests that, for the TS scheme, the system can achieve the same rate as the conventional TDMA system, and at the same time each user is still able to harvest a reasonable value of energy. When the harvested energy required at users is sufficiently large, however, a nonzero EH slot may be needed. This in turn degrades the rate of the TS scheme significantly. Hence, the PS scheme may outperform the TS scheme when the harvested energy is sufficiently large. From the view of implementation, the TS scheme is easier to implement at the receiver side by simply switching between the two operations of EH and ID. Moreover, in practical circuits the power splitter or switcher may introduce insertion loss and degrade the performance of the two schemes. This issue is unaddressed in this paper, and is left for future work.

\appendices
\section{Proof of Lemma \ref{lemma:1}}\label{appendix:proof 1}
To prove the concavity of function $f(q_{k,n},\alpha_k)$, it suffices to prove that for all $q_{k,n}\geq0$, $\alpha_k\geq0$, and the convex combination $(\hat{q}_{k,n},\hat{\alpha}_k)=\theta(\dot{q}_{k,n},\dot{\alpha}_k)+(1-\theta)(\ddot{q}_{k,n},\ddot{\alpha}_k)$ with
$\theta\in(0,1)$, we have $f(\hat{q}_{k,n},\hat{\alpha}_k)\geq \theta f(\dot{q}_{k,n},\dot{\alpha}_k)+(1-\theta)f(\ddot{q}_{k,n},\ddot{\alpha}_k)$.
With $q_{k,n}\geq0$, we consider the following four cases for $\alpha_k$.

  1) $\dot{\alpha}_k>0$ and $\ddot{\alpha}_k>0$: In this case, we have $\hat{\alpha}_k>0$. Since $\log_2\left(1+\frac{h_{k,n}q_{k,n}}{\N}\right)$ is a concave function of $q_{k,n}$, it follows that
      its perspective $\alpha_k\log_2\left(1+\frac{h_{k,n}q_{k,n}}{\N \alpha_k}\right)$ is jointly concave
      in $q_{k,n}$ and $\alpha_k$ for $\alpha_k>0$ \cite{Boydbook}. Therefore, we have $(\hat{q}_{k,n},\hat{\alpha}_k)\geq\theta(\dot{q}_{k,n},\dot{\alpha}_k)+(1-\theta)(\ddot{q}_{k,n},\ddot{\alpha}_k)$.

  2) $\dot{\alpha}_k>0$ and $\ddot{\alpha}_k=0$: In this case, we have $f(\ddot{q}_{k,n},\ddot{\alpha}_k)=0$, $\hat{\alpha}_k=\theta\dot{\alpha}_k$, and
  \begin{align*}\label{}
    f(\hat{q}_{k,n},\hat{\alpha}_k)
        &=\theta\dot{\alpha}_k\log_2\left(1+\frac{h_{k,n}(\theta\dot{q}_{k,n}+(1-\theta)\ddot{q}_{k,n})}{\N \theta\dot{\alpha}_k}\right)    \\
        &=\theta\dot{\alpha}_k\log_2\left(1+\frac{h_{k,n}\dot{q}_{k,n}}{\N \dot{\alpha}_{k,n}}+\frac{(1-\theta)h_{k,n}\ddot{q}_{k,n}}{\N \theta\dot{\alpha}_k}\right)
  \end{align*}
  Thus, we have $f(\hat{q}_{k,n},\hat{\alpha}_k)\geq \theta f(\dot{q}_{k,n},\dot{\alpha}_k)+(1-\theta)f(\ddot{q}_{k,n},\ddot{\alpha}_k)$.

  3) $\dot{\alpha}_k=0$ and $\ddot{\alpha}_k>0$: Similar as case 2), we have $f(\hat{q}_{k,n},\hat{\alpha}_k)\geq \theta f(\dot{q}_{k,n},\dot{\alpha}_k)+(1-\theta)f(\ddot{q}_{k,n},\ddot{\alpha}_k)$.

  4) $\dot{\alpha}_k=0$ and $\ddot{\alpha}_k=0$: In this case, we have $f(\hat{q}_{k,n},\hat{\alpha}_k)= f(\dot{q}_{k,n},\dot{\alpha}_k)=f(\ddot{q}_{k,n},\ddot{\alpha}_k)=0$. Therefore, $f(\hat{q}_{k,n},\hat{\alpha}_k)=\theta f(\dot{q}_{k,n},\dot{\alpha}_k)+(1-\theta)f(\ddot{q}_{k,n},\ddot{\alpha}_k)$.

From the above four cases, we have $f(\hat{q}_{k,n},\hat{\alpha}_k)\geq \theta f(\dot{q}_{k,n},\dot{\alpha}_k)+(1-\theta)f(\ddot{q}_{k,n},\ddot{\alpha}_k)$ for all
$q_{k,n}\geq0$ and $\alpha_k\geq0$, and thus $f(q_{k,n},\alpha_k)$ is concave, which
completes the proof.

\section{Proof of Proposition \ref{proposition-subgradient}}\label{appendix:proof 2}
For any $\hat{\lambda}_i\geq0,i=1,\ldots,K,\hat{\mu}\geq0,\hat{\nu}\geq0$, we have

\begin{small}\vspace{-0.1in}
\begin{align}
  g\left(\hat{\lambda}_i,\hat{\mu},\hat{\nu}\right)
  & \geq \mathcal{L}\left(\{\dot{q}_{k,n}\},\{\dot{\alpha}_k\},\{\hat{\lambda}_i\},\hat{\mu},\hat{\nu}\right) \nonumber\\
  & = g\left(\{\lambda_i\},\mu,\nu\right)    \nonumber\\
  & ~~ ~~ +\sum\limits_{i=1}^{K}(\hat{\lambda}_i-\lambda_i)\left(\zeta\sum\limits_{k\neq i}^{K+1}\sum\limits_{n=1}^{N}h_{i,n}\dot{q}_{k,n}-\overline{E}_i\right) \nonumber\\
  & ~~ ~~ +(\hat{\mu}-\mu)\left(P-\sum\limits_{k=1}^{K+1}\sum\limits_{n=1}^{N}\dot{q}_{k,n}\right) \nonumber\\
  & ~~ ~~ +(\hat{\nu}-\nu)\left(1-\sum\limits_{k=1}^{K+1}\dot{\alpha}_k\right)    \nonumber
\end{align}
\end{small}%
By the definition of subgradient, the choice of $\mv{d}$ as given in (\ref{eq:subgradient-TS}) is indeed a subgradient for $g\left(\{\lambda_i\},\mu,\nu\right)$.

\end{document}